\newcolumntype{L}{>{$}l<{$}} 
\newcolumntype{C}{>{$}c<{$}} 
\newtheorem{assumption}{Assumption}
\newtheorem{problem}{Problem}
\newcommand{\R}{\mathbb{R}}
\newcommand{\bmat}[1]{\begin{bmatrix}#1\end{bmatrix}}
\newcommand{\smat}[1]{\left[\begin{smallmatrix} #1 \end{smallmatrix} \right]}
\title{Recurrent Neural Network Controllers Synthesis with Stability Guarantees for Partially Observed Systems}
\author{
    Fangda Gu\equalcontrib\textsuperscript{\rm 1}, He Yin\equalcontrib\textsuperscript{\rm 1}, Laurent {El Ghaoui}\textsuperscript{\rm 1}, Murat Arcak\textsuperscript{\rm 1}, Peter Seiler\textsuperscript{\rm 2}, Ming Jin\textsuperscript{\rm 3}
}
\newcommand{\BEAS}{\begin{eqnarray*}}
\newcommand{\EEAS}{\end{eqnarray*}}
\newcommand{\BEA}{\begin{eqnarray}}
\newcommand{\EEA}{\end{eqnarray}}
\newcommand{\BEQ}{\begin{equation}}
\newcommand{\EEQ}{\end{equation}}
\newcommand{\BIT}{\begin{itemize}}
\newcommand{\EIT}{\end{itemize}}
\newcommand{\BNUM}{\begin{enumerate}}
\newcommand{\ENUM}{\end{enumerate}}
\newcommand{\BA}{\begin{array}}
\newcommand{\EA}{\end{array}}
\newcommand{\eg}{{\it e.g.}}
\newcommand{\ie}{{\it i.e.}}
\newcommand{\diag}{\mathop{\bf diag}}
\newtheorem{definition}{Definition}[section]
\newtheorem{theorem}{Theorem}[section]
\newtheorem{lemma}{Lemma}[section]
\newtheorem{remark}{Remark}[section]
\newcommand\blfootnote[1]{%
  \begingroup
  \renewcommand\thefootnote{}\footnote{#1}%
  \addtocounter{footnote}{-1}%
  \endgroup
}
\begin{document}

\maketitle


\begin{abstract}
Neural network controllers have become popular in control tasks thanks to their flexibility and expressivity. Stability is a crucial property for safety-critical dynamical systems, while stabilization of partially observed systems, in many cases, requires controllers to retain and process long-term memories of the past. We consider the important class of recurrent neural networks (RNN) as dynamic controllers for nonlinear uncertain partially-observed systems, and derive convex stability conditions based on integral quadratic constraints, S-lemma and sequential convexification. 
To ensure stability during the learning and control process, we propose a projected policy gradient method that iteratively enforces the stability conditions in the reparametrized space taking advantage of mild additional information on system dynamics. Numerical experiments show that our method learns stabilizing controllers while using fewer samples and achieving higher final performance compared with policy gradient.


\end{abstract}

\section{Introduction}


Neural network decision making and control has seen a huge advancement recently accompanied by the success of reinforcement learning (RL) \citep{sutton2018reinforcement}. In particular, deep reinforcement learning (DRL) has achieved super-human performance with neural network policies (also referred to as controllers in control tasks) in various domains \citep{mnih2015human, lillicrap2016continuous,silver2016mastering}. 

Policy gradient \citep{sutton1999policy} is one of the most important approaches to DRL that synthesizes policies for continuous decision making problems. For control tasks, policy gradient method and its variants have successfully synthesized neural network controllers to accomplish complex control goals \citep{levine2018learning} without solving potentially non-linear planing problems at test time \citep{levine2016end}. 
However, most of these methods 
focus on maximizing the reward function which only indirectly enforce desirable properties. Specifically, global stability of the closed-loop system \citep{sastry2013nonlinear} guarantees 
convergence to the desired state of origin from any initial state and therefore is a very important property for safety critical systems     (\eg\; aircraft control \citep{chakraborty2011susceptibility}) where not a single diverging trajectory is acceptable.
However, the set of parameters corresponding to stabilizing controllers is in general nonconvex even in the simple setting of linear systems with linear controllers \cite{fazel2018global}, which poses significant computational challenges for neural network controllers under the general setting of nonlinear systems.

Thanks to recent robustness studies of deep learning, 
we have seen attempts on giving stability certificates 
and/or ensuring stability at test time 
for fully-observed systems controlled by neural networks. Yet stability problems for neural network controlled partially observed systems remain open. Unlike fully-observed control systems 
where the plant states are fully revealed to the controller, most real-world control systems are only partially observed 
due to modeling inaccuracy, sensing limitations, and physical constraints \citep{braziunas2003pomdp}.  
Here, sensible estimates of the full system state usually depend on historical observations \citep{callier2012linear}. Some partially observed systems are modeled using partially observed Markov decision process (POMDP) \citep{monahan1982state} where an optimal solution is NP hard in general \citep{mundhenk2000complexity}.


\textbf{Paper contributions.}
%
In the paper, we propose a method to synthesize recurrent neural network (RNN) controllers with exponential stability guarantees for partially observed systems. 
We derive a convex inner approximation to the non-convex set of stable RNN parameters based on integral quadratic constraints \citep{megretski1997system}, loop transformation \citep[Chap. 4]{sastry2013nonlinear} and a sequential semidefinite convexification technique, which guarantees exponential stability for both linear time invariant (LTI) systems and general nonlinear uncertain systems.
A novel framework of projected policy gradient is proposed to maximize some unknown/complex reward function and ensure stability in the online setting  
where a guaranteed-stable RNN controller is synthesized and iteratively updated while interacting with and controlling the underlying system, which differentiates our works from most post-hoc validation methods.
Finally, we carry out comprehensive comparisons with policy gradient, and demonstrate that our method effectively ensures the closed-loop stability and achieves higher reward on a variety of control tasks, including vehicle lateral control and power system frequency regulation. 

\begin{figure*}[h!]
  \centering
  \begin{minipage}{0.49\textwidth}
  \centering
  \includegraphics[width=0.85\textwidth]{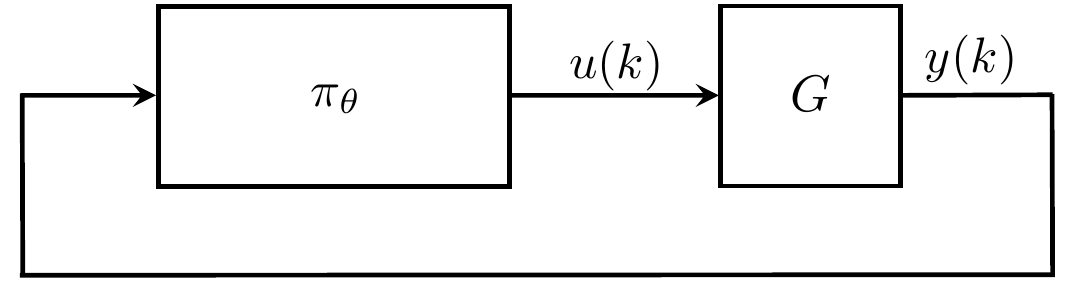}
  \caption{Feedback system of plant $G$ and RNN controller $\pi_\theta$}
  \label{fig:feedback}    
  \end{minipage}
  \hfill
  \begin{minipage}{0.45\textwidth}
  \centering
  \includegraphics[width=0.8\textwidth]{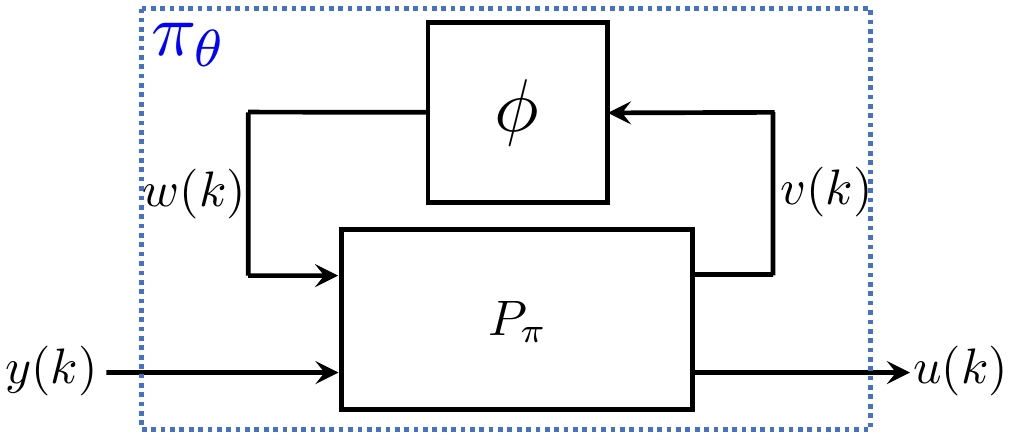}
  \caption{RNN  as an interconnection of $P_\pi$ and $\phi$}
  \label{fig:RNN}    
\end{minipage}
\end{figure*}

\textbf{Paper outline.} 
In Section~\ref{sec:related_work}, we outline related works on addressing partial observability, and enforcing stability in reinforcement learning. 
Section~\ref{sec:LTI} discusses our proposed method for synthesizing RNN controllers for LTI plants with stability guarantees, and Section~\ref{sec:IQC} extends it to systems with uncertainties and nonlinearities. Section~\ref{sec:numerical}  compares the proposed projected policy gradient method with policy gradient through numerical experiments.

\textbf{Notation.}
$\mathbb{S}^n, \mathbb{S}^n_{+}, \mathbb{S}^n_{++}$ denote the sets of $n$-by-$n$ symmetric, positive semidefinite and positive definite matrices, respectively. $\mathbb{D}^n_{+}$, $\mathbb{D}^n_{++}$ denote the set of $n$-by-$n$ diagonal positive semidefinite, and diagonal positive definite matrices.  The notation $\|\cdot\|: \R^n \rightarrow \R$ denotes the standard 2-norm. We define $\ell_{2e}^n$ to be the set of all one-sided sequences $x: \mathbb{N} \rightarrow \R^n$. The subset $\ell_{2}^{n} \subset \ell_{2e}^n$ consists of all square-summable sequences. When applied to vectors, the orders $>, \leq$ are applied elementwise.

\section{Related Work} \label{sec:related_work}
\textbf{Partially Observed Decision Making and Output Feedback Control.}
In many problems \citep{talpaert2019exploring, barto1995learning}, only specific outputs but not the full system states are available for the decision maker. 
Therefore, memory in the controller is required to recover the full system states \citep{scherer1997multiobjective}. Control of these partially observed systems is often referred to as output feedback control \citep{callier2012linear}, and has been studied extensively from both control and optimization perspectives \citep{doyle1978guaranteed, zheng2021analysis}. Under the setting with convexifiable objectives (e.g., $H_{\infty}$ or $H_2$ performances), the optimal linear dynamic (\ie~ with memory) controller can be obtained by using a change of variables or solving algebraic Riccati equations \citep{gahinet1994linear, zhou1996robust}. However, for more sophisticated settings with unknown and/or flexibly defined cost functions, the problems  become intractable for the aforementioned traditional methods, and RL techniques
are proposed to reduced the computation cost and improve overall performance at test time, including the ones \citep{levine2013guided, levine2016end} with static neural network controllers, and the ones \citep{zhang2016learning, heess2015memory, wierstra2007solving} with dynamic controllers, represented by RNNs/long short-term memory neural networks. 


\textbf{Stability Guarantees For Neural Network Controlled Systems.}
As neural networks become popular in control tasks, safety and robustness of neural networks and neural network controlled systems has been actively discussed \citep{morimoto2005robust,  luo2014off, friedrich2017robust, berkenkamp2017safe, chow2018lyapunov, matni2019self, han2019h, recht2019tour, choi2020reinforcement, zhang2020policy, fazlyab2020safety}. Closely related to this work are recent papers on robustness analysis of memory-less neural networks controlled systems based on robust control ideas. 
\citet{yin2021stability, yin2021imitation, pauli2021linear, jin2020stability}  
conduct stability analysis of neural network controlled linear and nonlinear systems and propose verification methods by characterizing activation functions using quadratic constraints. \citet{donti2020enforcing} adds additional projection layer on the controller to ensure stability for fully observed systems. 
\citep{revay2020convex} studies the stability of RNN itself when fitted to data but does not consider any plant to control by such RNN. The most related works are those that study dynamic neural network controllers. \citet{anderson2007robust, knight2011stable} adapt RNN controllers through RL techniques to obtain stability guarantees. However, in these works, the reward function is assumed to be known, and conservative updates of controller parameters projected to a box neighborhood of the previous iterate are applied due to the non-convexity in their conditions. In contrast, our work enables much larger and more efficient updates thanks to jointly convex conditions derived through a novel sequential convexification and loop transformation
approach unseen in these works. 




\begin{figure*}[h!]
  \centering
  \begin{minipage}{0.49\textwidth}
  \centering 
		\includegraphics[width=0.7\textwidth]{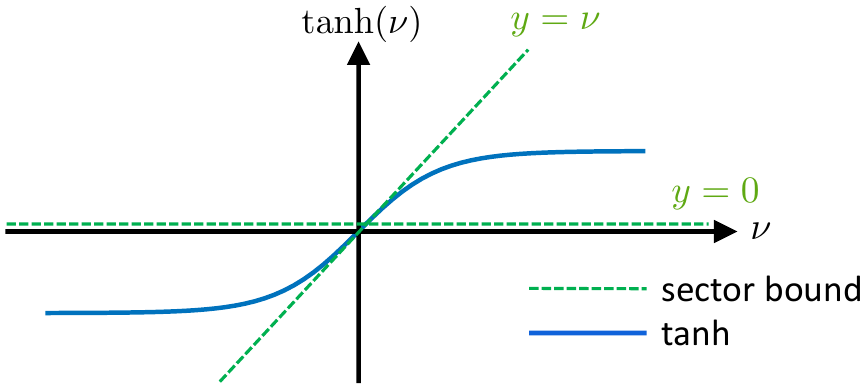}
		\includegraphics[width=0.7\textwidth]{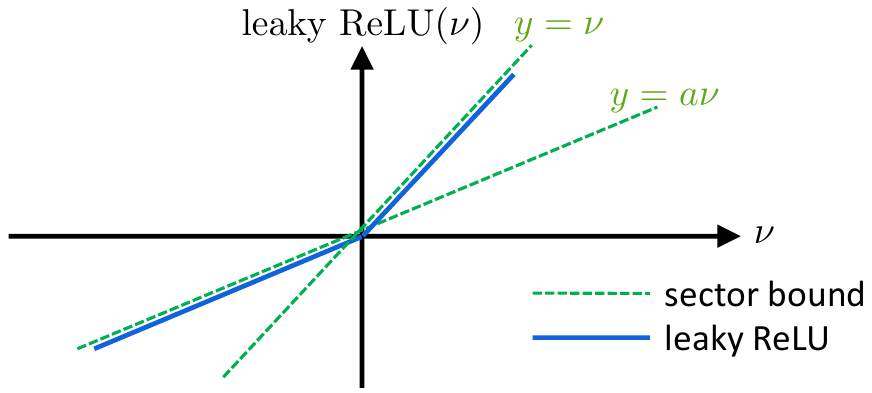}
		\caption{$\tanh \in$ sector $[0, 1]$,  \\
		Leaky ReLU $\in$ sector $[a, 1]$}
		\label{fig:sector}
  \end{minipage}
  \hfill
  \begin{minipage}{0.49\textwidth}
  \centering 
  \includegraphics[width=0.88\textwidth]{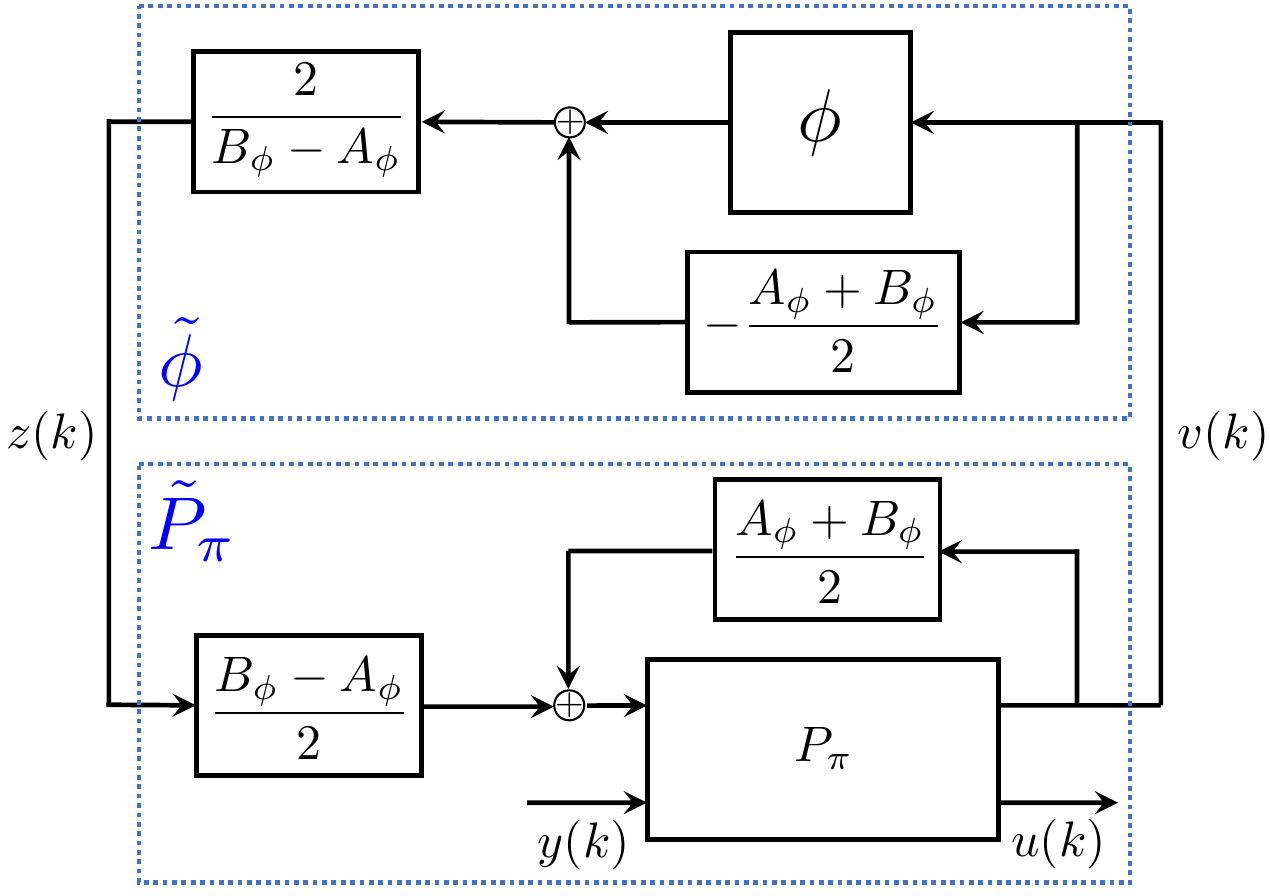}
  \caption{Loop transformation.  If $\phi \in$ sector $[\alpha_\phi, \beta_\phi]$, then $\tilde{\phi} \in $ sector $[-1_{n_\phi \times 1}, 1_{n_\phi\times 1}]$.}
  \label{fig:loop_tran}    
  \end{minipage}
\end{figure*} 

\section{Partially Observed Linear Systems} \label{sec:LTI}
\subsection{Problem Formulation}
Consider the feedback system (shown in Fig.~\ref{fig:feedback}) consisting of a plant $G$ and an RNN controller $\pi_\theta$ which is expected to stabilize the system (\ie~steer the states of $G$ to the origin). To streamline the presentation, we consider a partially observed, linear, time-invariant (LTI) system~$G$ defined by the following discrete-time model:
\begin{subequations}\label{eq:nomi_G}
\begin{align}
x(k+1) &= A_G \ x(k) + B_G \ u(k) \\
y(k) &= C_G \ x(k)
\end{align}
\end{subequations}
where $x(k) \in \R^{n_G}$ is the state, $u(k) \in \R^{n_u}$ is the control input, and $y(k) \in \R^{n_y}$ is the output. $A_G \in \R^{n_G \times n_G}$, $B_G \in \R^{n_G \times n_u}$, and $C_G \in \R^{n_y \times n_G}$.  Since the plant $G$ is partially observed, the observation matrix $C_G$ may have a sparsity pattern or be column-rank deficient.  
\begin{assumption}
We assume that $(A_G, B_G)$ is stabilizable, and $(A_G, C_G)$ is detectable\footnote{The definitions of stabilizability and detectability can be found in \citep{callier2012linear}.}.
\end{assumption}

\begin{assumption}\label{ass:known_dyn}
We assume $A_G, B_G,$ and $C_G$ are known.
\end{assumption}
Assumption~\ref{ass:known_dyn} is partially lifted in Section~\ref{sec:IQC} where we only assume partial information on the system dynamics.

\begin{problem}
Our goal is to find a controller $\pi$ that maps the observation $y$ to an action $u$ to both maximize some unknown reward $R = \sum_{k=0}^T r_k(x(k), u(k))$ over finite horizon $T$ and stabilize the plant $G$.
\end{problem}

The single step reward $r_k(x(k), u(k))$ is assumed to be unknown and potentially highly complex to capture the vast possibility of desired controller. e.g. In many cases, to ensure extra safety, the reward is set to $r_k(x(k), u(k)) = 0, \forall k \ge l$ if there is a state violation at step $l$. This cannot be captured by any simple negative quadratic functions.


\subsection{Controllers Parameterization}
Output feedback control with known and convexifiable reward has been studied extensively \citep{scherer1997multiobjective}, and linear dynamic controllers suffice for this case. However, in our problem setting, since the reward is unknown and nonconvex, and systems dynamics will become uncertain and nonlinear in Section~\ref{sec:IQC}, we consider a dynamic controller in the form of an RNN, which makes a class of high-capacity flexible controllers.

We model the RNN controller $\pi_\theta$ as an interconnection of an LTI system $P_\pi$, and combined activation functions $\phi: \R^{n_\phi} \rightarrow \R^{n_\phi}$ as shown in Fig.~\ref{fig:RNN}. This parameterization is expressive, and contains many widely used model structures \citep{revay2020convex}. The RNN $\pi_\theta$ is defined as follows 
\begin{align} 
&P_\pi \left\{\begin{array}{ll}
\xi(k+1) &= A_K \ \xi(k) \ + B_{K1} \ w(k) + B_{K2} \ y(k) \\
u(k) &= C_{K1} \ \xi(k) + D_{K1} \ w(k) + D_{K2}  \ y(k) \\
v(k) &= C_{K2} \ \xi(k) + D_{K3} \ y(k) \end{array}\right. \notag \\
&\hspace{0.87cm} w(k) \hspace{0.85cm} = \phi(v(k)) \label{eq:PK_def}
\end{align}
where $\xi \in \R^{n_\xi}$ is the hidden state, $v, w \in \R^{n_\phi}$ are the input and output of $\phi$, and matrices $A_K, \dots, D_{K3}$ are parameters to be learned. Define $\theta = \smat{A_K & B_{K1} & B_{K2} \\ C_{K1} & D_{K1} & D_{K2} \\ C_{K2} & 0 & D_{K3}}$ as the collection of the learnable parameters of $\pi_\theta$. We assume the initial condition of $\xi$ to be zero $\xi(0) = 0_{n_\xi \times 1}$. 
The combined nonlinearity $\phi$ is applied element-wise, i.e., $\phi := [\varphi_1(v_1), ..., \varphi_{n_\phi}(v_{n_\phi})]^\top$, where $\varphi_i$ is the $i$-th scalar activation function. We assume that the activation has a fixed point at origin, \ie~$\phi(0) = 0$.





\subsection{Quadratic Constraints for Activation Functions}
The stability condition relies on quadratic constraints (QCs)
to bound the activation function. A typical QC is the
sector bound as defined next.
\begin{definition}
  \label{def:sector}
  Let $\alpha \le \beta$ be given.  The function
  $\varphi: \R \rightarrow \R$ lies in the sector
  $[\alpha,\beta]$ if:
  \begin{align}
    ( \varphi(\nu) - \alpha \nu ) \cdot
       (\beta \nu - \varphi(\nu)) \ge 0
    \,\,\, \forall \nu \in \R.
  \end{align}
\end{definition}
The interpretation of the sector $[\alpha,\beta]$ is  that
$\varphi$ lies between lines passing through the origin with slope
$\alpha$ and $\beta$. Many activations are sector bounded, e.g., leaky ReLU is sector bounded in $[a, 1]$ with its parameter $a \in (0, 1)$; ReLU and $\tanh$ are sector bounded in $[0,1]$ (denoted as $\tanh \in $ sector $[0, 1]$). Fig.~\ref{fig:sector} illustrates different activations (blue solid) and their sector bounds (green dashed).

Sector constraints can also be defined for combined activations $\phi$. Assume the $i$-th scalar activation $\varphi_i$ in $\phi$ is sector bounded by $[\alpha_i, \beta_i]$, $i=1,...,n_\phi$, then these sectors can be stacked into vectors $\alpha_\phi, \beta_\phi \in \R^{n_\phi}$, where $\alpha_\phi=[\alpha_1,...,\alpha_{n_\phi}]$ and $\beta_\phi=[\beta_1, ..., \beta_{n_\phi}]$, to provide QCs satisfied by  $\phi$.
\begin{lemma}
Let $\alpha_\phi, \beta_\phi \in \R^{n_\phi}$ be given with $\alpha_\phi \leq \beta_\phi$. Suppose that $\phi$ satisfies the sector bound $[\alpha_\phi, \beta_\phi]$ element-wise. For any $\Lambda \in \mathbb{D}_{+}^{n_\phi}$, and for all $v \in \R^{n_\phi}$ and $w = \phi(v)$, it holds that
\begin{align}
        \bmat{v \\ w}^\top \bmat{-2A_\phi B_\phi \Lambda & (A_\phi+B_\phi)\Lambda \\ (A_\phi+B_\phi)\Lambda & -2\Lambda} \bmat{v \\ w} \ge 0, \label{eq:origin_QC}
\end{align}
where $A_\phi = diag(\alpha_\phi)$, and $B_\phi = diag(\beta_\phi)$. 
\end{lemma}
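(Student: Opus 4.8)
The plan is to construct the combined matrix-valued QC as a nonnegatively weighted sum of the $n_\phi$ scalar sector constraints, exploiting the fact that $A_\phi$, $B_\phi$, and $\Lambda$ are all diagonal so that the quadratic form decouples coordinatewise. First I would fix an index $i\in\{1,\dots,n_\phi\}$ and invoke Definition~\ref{def:sector} for the scalar activation $\varphi_i\in[\alpha_i,\beta_i]$, which with $w_i=\varphi_i(v_i)$ reads $(w_i-\alpha_i v_i)(\beta_i v_i-w_i)\ge 0$. Expanding the product yields the equivalent scalar quadratic inequality
\begin{align*}
-\alpha_i\beta_i\,v_i^2 + (\alpha_i+\beta_i)\,v_i w_i - w_i^2 \ge 0 .
\end{align*}

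Next, letting $\lambda_i\ge 0$ denote the $i$-th diagonal entry of $\Lambda\in\mathbb{D}_+^{n_\phi}$, I would multiply this inequality by the nonnegative scalar $2\lambda_i$ (which preserves its sign) and sum over $i=1,\dots,n_\phi$. The crux is to recognize that this sum is exactly the block bilinear form in \eqref{eq:origin_QC}. Because $A_\phi=\diag(\alpha_\phi)$, $B_\phi=\diag(\beta_\phi)$, and $\Lambda$ are simultaneously diagonal, the four blocks $-2A_\phi B_\phi\Lambda$, $(A_\phi+B_\phi)\Lambda$, $(A_\phi+B_\phi)\Lambda$, and $-2\Lambda$ are themselves diagonal with respective $i$-th entries $-2\alpha_i\beta_i\lambda_i$, $(\alpha_i+\beta_i)\lambda_i$, $(\alpha_i+\beta_i)\lambda_i$, and $-2\lambda_i$. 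Expanding $\bmat{v \\ w}^\top M \bmat{v \\ w}$ for this symmetric $M$ therefore produces $\sum_{i} 2\lambda_i\big(-\alpha_i\beta_i v_i^2 + (\alpha_i+\beta_i)v_i w_i - w_i^2\big)$ with no cross terms between distinct indices $i\neq j$, since the vanishing off-diagonal matrix entries kill all such couplings.

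Since each summand equals $2\lambda_i\,(w_i-\alpha_i v_i)(\beta_i v_i-w_i)$, a product of the nonnegative weight $2\lambda_i$ and a term that is nonnegative by the scalar sector hypothesis, the entire sum is nonnegative, which is precisely the asserted inequality. I do not expect a substantive obstacle: the argument is a direct aggregation of scalar facts, and the only point needing care is the bookkeeping verifying that the diagonal block structure reconstitutes the single matrix in \eqref{eq:origin_QC}. It is worth emphasizing that the restriction $\Lambda\in\mathbb{D}_+^{n_\phi}$, rather than merely $\Lambda\in\mathbb{S}_+^{n_\phi}$, is exactly what both guarantees each weight $\lambda_i\ge 0$ and prevents the weighting from introducing coordinate couplings that the element-wise sector bound cannot control.
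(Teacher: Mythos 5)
Your proof is correct and takes essentially the same route as the paper's: the paper defers this lemma to \citep{fazlyab2020safety}, where it is established by exactly this argument---a conic combination of the $n_\phi$ scalar sector inequalities with nonnegative weights $\lambda_i$, reassembled into the block quadratic form via the diagonal structure of $A_\phi$, $B_\phi$, and $\Lambda$. Your algebra (the expansion $-\alpha_i\beta_i v_i^2+(\alpha_i+\beta_i)v_iw_i-w_i^2\ge 0$ and the weighted sum) checks out, so nothing further is needed.
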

A proof is available in \citep{fazlyab2020safety}. 

\subsection{Loop Transformation}
To derive convex stability conditions for their efficient enforcement in
the learning process, we first perform a loop transformation on the RNN as shown in Fig.~\ref{fig:loop_tran}.
Through loop transformation, we obtain a new representation of the controller $\pi_{\tilde{\theta}}$, which is equivalent to the one shown in Fig.~\ref{fig:RNN}: \begin{subequations}\label{eq:RNN_transformed}
\begin{align}
    \bmat{v \\ u} &= \tilde{P}_\pi \bmat{z \\ y} \\
    z(k) &= \tilde{\phi} (v(k)).
\end{align}
\end{subequations}
The newly obtained nonlinearity $\tilde{\phi}$, defined in Fig.~\ref{fig:loop_tran}, is sector bounded by $[-1_{n_\phi \times 1}, 1_{n_\phi \times 1}]$, and thus it satisfies a simplified QC: for any $\Lambda \in \mathbb{D}_{+}^{n_\phi}$, it holds that
\begin{align}
    \bmat{v \\ z}^\top \bmat{\Lambda & 0 \\ 0 & -\Lambda} \bmat{v \\ z} \ge 0, \ \forall v \in \R^{n_\phi} \ \text{and} \ z = \tilde{\phi}(v). \label{eq:shifted_QC}
\end{align}

The transformed system $\tilde{P}_\pi$, defined in Fig.~\ref{fig:loop_tran}, is of the form:
\begin{align*}
\begin{array}{ll}
    \xi(k+1) &= \tilde{A}_K ~~ \xi(k) + B_{K1}\frac{B_\phi - A_\phi}{2} \ z(k) + \tilde{B}_{K2} \ y(k) \\
u(k) &= \tilde{C}_{K1} \ \xi(k) + D_{K1}\frac{B_\phi - A_\phi}{2} \ z(k) + \tilde{D}_{K2} \ y(k) \\
v(k) &= C_{K2} \ \xi(k) + D_{K3} \ y(k) \end{array}
\end{align*}
where
\small{
\begin{align}
&\begin{array}{l}
\hspace{0.12cm}\tilde{A}_K = A_K + B_{K1} S_\phi C_{K2}, \hspace{0.3cm} \tilde{B}_{K2} = B_{K2} + B_{K1} S_\phi D_{K3}, \\
\tilde{C}_{K1} = C_{K1} + D_{K1} S_\phi C_{K2}, \hspace{0.15cm} \tilde{D}_{K2} = D_{K2} + D_{K1} S_\phi D_{K3},  
\end{array}\notag \\
&\hspace{3cm} S_\phi := \textstyle \frac{A_\phi + B_\phi}{2}. \label{eq:1to1correspond}
\end{align}
}
The derivation of $\tilde{P}_\pi$ can be found in Appendix~\ref{app:tranplant}. We define the learnable parameters of $\pi_{\tilde{\theta}}$ as $\tilde{\theta} = \smat{\tilde{A}_K & B_{K1} & \tilde{B}_{K2} \\ \tilde{C}_{K1} & D_{K1}& \tilde{D}_{K2} \\  C_{K2} & 0 & D_{K3}}$. 
Since there is an one-to-one correspondence \eqref{eq:1to1correspond} between the transformed parameters $\tilde{\theta}$ and the original parameters $\theta$, we will learn in the reparameterized space and uniquely recover the original parameters accordingly.

\subsection{Convex Lyapunov Condition}
The feedback system of plant $G$ and RNN controller in $\pi_{\tilde{\theta}}$  \eqref{eq:RNN_transformed} is defined by the following equations
\begin{subequations}\label{eq:feedback_nominal}
\begin{align}
    \zeta(k+1) &= \mathcal{A} \ \zeta(k) + \mathcal{B} \ z(k)\\
v(k) &= \ \mathcal{C} \ \zeta(k) + \mathcal{D} \ z(k) \\
z(k) & = \tilde{\phi}(v(k))
\end{align}
\end{subequations}
where $\zeta = [x^\top, \ \xi^\top] ^\top$ gathers the states of $G$ and $\tilde{P}_\pi$, and
\begin{align*}
\mathcal{A} &= \smat{ A_G + B_G \tilde{D}_{K2}C_G & B_G \tilde{C}_{K1} \\ \tilde{B}_{K2}C_G & \tilde{A}_{K}}, \mathcal{B} = \smat{B_G D_{K1}\frac{B_\phi - A_\phi}{2} \\ B_{K1}\frac{B_\phi - A_\phi}{2}},\\
\mathcal{C} &= \smat{D_{K3}C_G & C_{K2}}, \hspace{1.5cm} \mathcal{D} = 0_{n_\phi\times n_\phi}.
\end{align*}
Note that matrices $\mathcal{A}, \mathcal{B}, \mathcal{C}, \mathcal{D}$ are affine in $\tilde{\theta}$. The following theorem incorporates the QC for $\tilde{\phi}$ in the Lyapunov condition to derive the exponential stability condition of the feedback system using the S-Lemma \citep{yakubovich1971, boyd1994linear}

\begin{theorem} [Sequential Convexification] \label{thm:lyap_nominal}
Consider the feedback system of plant $G$ in \eqref{eq:nomi_G}, and RNN controller $\pi_{\tilde{\theta}}$ in \eqref{eq:RNN_transformed}. Given a rate $\rho$ with $0 \leq \rho \leq 1$, and matrices $\bar{P} \in \R^{n_\zeta \times n_\zeta}$ and $\bar{\Lambda} \in \R^{n_\phi \times n_\phi}$, if there exist matrices $Q_1 \in \mathbb{S}_{++}^{n_\zeta}$ and $Q_2 \in \mathbb{D}_{++}^{n_\phi}$, and parameters $\tilde{\theta}$ such that the following condition holds
\small{
\begin{align}
     \begin{bmatrix}\rho^2(2\bar{P}-\bar{P}^\top Q_1 \bar{P}) & 0 & \mathcal{A}^\top & \mathcal{C}^\top \\ 0 & 2\bar{\Lambda} - \bar{\Lambda}^\top Q_2 \bar{\Lambda} & \mathcal{B}^\top& \mathcal{D}^\top \\ \mathcal{A} & \mathcal{B} & Q_1 & 0\\ \mathcal{C} & \mathcal{D} & 0 &Q_2 \end{bmatrix} \succeq 0, \label{eq:lyap_convex}
\end{align}
}
then for any $x(0)$, we have $\|x(k)\| \leq \sqrt{\text{cond}(P)}\rho^k \|x(0)\|$ for all $k$, where cond$(P)$ is the condition number of $P$, and $P := Q_1^{-1}$ i.e., the feedback system is exponentially stable with rate $\rho$. 
\end{theorem}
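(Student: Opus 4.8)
The plan is to produce a quadratic storage function and turn \eqref{eq:lyap_convex} into a one-step contraction. Set $P:=Q_1^{-1}\succ0$, $\Lambda:=Q_2^{-1}\in\mathbb{D}_{++}^{n_\phi}$, and $V(\zeta):=\zeta^\top P\zeta$; abbreviate $\zeta^+:=\zeta(k+1)$. The whole theorem reduces to the dissipation inequality $V(\zeta^+)\le\rho^2 V(\zeta(k))$ along every trajectory of \eqref{eq:feedback_nominal}: iterating gives $V(\zeta(k))\le\rho^{2k}V(\zeta(0))$, and the two-sided bound $\lambda_{\min}(P)\|\zeta\|^2\le V(\zeta)\le\lambda_{\max}(P)\|\zeta\|^2$ then yields $\|\zeta(k)\|\le\rho^k\sqrt{\mathrm{cond}(P)}\,\|\zeta(0)\|$. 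Finally, because $\xi(0)=0$ we have $\|\zeta(0)\|=\|x(0)\|$, and since $x$ is a subvector of $\zeta$ we have $\|x(k)\|\le\|\zeta(k)\|$; combining these gives the stated estimate. So all the work is in proving the contraction.

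To obtain the contraction I would apply the S-lemma to combine the Lyapunov decrease with the shifted sector QC \eqref{eq:shifted_QC}. Writing $N:=\smat{\mathcal{A}&\mathcal{B}\\\mathcal{C}&\mathcal{D}}$ and using $\smat{\zeta(k+1)\\v(k)}=N\smat{\zeta(k)\\z(k)}$, a direct expansion gives
\begin{align*}
&\bmat{\zeta\\z}^\top\Big(N^\top\bmat{P&0\\0&\Lambda}N-\bmat{\rho^2P&0\\0&\Lambda}\Big)\bmat{\zeta\\z}\\
&\qquad=\big(V(\zeta^+)-\rho^2V(\zeta)\big)+\big(v^\top\Lambda v-z^\top\Lambda z\big).
\end{align*}
Hence, if the symmetric inequality $\mathrm{diag}(\rho^2P,\Lambda)-N^\top\mathrm{diag}(P,\Lambda)N\succeq0$ holds, then along trajectories $V(\zeta^+)-\rho^2V(\zeta)\le-(v^\top\Lambda v-z^\top\Lambda z)\le0$, where the last step is exactly \eqref{eq:shifted_QC} with $\Lambda\in\mathbb{D}_{++}^{n_\phi}$. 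Since $\mathrm{diag}(P,\Lambda)\succ0$, a Schur complement shows this is equivalent to the $4\times4$ block inequality obtained from \eqref{eq:lyap_convex} by putting $\rho^2P=\rho^2Q_1^{-1}$ and $\Lambda=Q_2^{-1}$ in its $(1,1)$ and $(2,2)$ blocks (the $(3,3)$, $(4,4)$ blocks being $Q_1=P^{-1}$, $Q_2=\Lambda^{-1}$). Call this the true, nonconvex certificate $\Xi(P,\Lambda)\succeq0$.

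The crux — and the only genuinely non-routine step — is the convexification that turns $\Xi(P,\Lambda)\succeq0$, nonconvex because both $Q_1$ and $Q_1^{-1}$ (resp.\ $Q_2$, $Q_2^{-1}$) appear, into the honest LMI \eqref{eq:lyap_convex}. I would use the tangent-plane inequality from $(\bar P-Q_1^{-1})^\top Q_1(\bar P-Q_1^{-1})\succeq0$, which for symmetric $\bar P$ expands to
\begin{align*}
Q_1^{-1}\ \succeq\ 2\bar P-\bar P^\top Q_1\bar P,
\end{align*}
an affine-in-$Q_1$ lower bound on $Q_1^{-1}=P$ that is tight exactly when $\bar P=Q_1^{-1}$; the analogous $Q_2^{-1}\succeq2\bar\Lambda-\bar\Lambda^\top Q_2\bar\Lambda$ holds for diagonal $\bar\Lambda$. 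Substituting these affine under-approximations for $P$ and $\Lambda$ in the $(1,1)$ and $(2,2)$ blocks of $\Xi(P,\Lambda)$ yields exactly the matrix in \eqref{eq:lyap_convex}, now jointly affine in $(Q_1,Q_2,\tilde\theta)$. The decisive point is the sign of the correction: since $\rho^2\ge0$ and $2\bar P-\bar P^\top Q_1\bar P\preceq P$, $2\bar\Lambda-\bar\Lambda^\top Q_2\bar\Lambda\preceq\Lambda$, the difference between $\Xi(P,\Lambda)$ and the matrix in \eqref{eq:lyap_convex} is block-diagonal and positive semidefinite, so feasibility of the convex \eqref{eq:lyap_convex} implies $\Xi(P,\Lambda)\succeq0$. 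This closes the chain \eqref{eq:lyap_convex} $\Rightarrow\Xi(P,\Lambda)\succeq0\Rightarrow$ contraction $\Rightarrow$ exponential stability. I expect the main obstacle to be precisely establishing this matrix inequality and, above all, verifying that the linearization relaxes the certificate only in the correct PSD direction, so that the convex condition is genuinely sufficient — not merely necessary — for stability.
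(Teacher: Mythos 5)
Your proof is correct and takes essentially the same route as the paper's: the same tangent-plane inequality $2\bar{P}-\bar{P}^\top Q_1\bar{P}\preceq Q_1^{-1}$ (resp.\ for $\bar\Lambda$, $Q_2$), the same Schur-complement reduction, and the same Lyapunov-plus-sector-QC contraction argument finished off with $\xi(0)=0$ and the condition-number bound. The only difference is ordering --- the paper unwinds the LMI via Schur complement first and applies the linearization bound second, while you introduce the nonconvex certificate $\Xi(P,\Lambda)$ and verify both implications into it --- which is immaterial since the PSD perturbation sits in blocks disjoint from the Schur pivot.
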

The above convex relaxation of the non-convex condition \eqref{eq:lyap_cond} leverages a ``linearizing'' semi-definite inequality based on a previous guess of $Q_1^{-1}$ and $Q_2^{-1}$ (as $\bar{P}$ and $\bar{\Lambda}$). A complete proof is provided in Appendix~\ref{app:pf_nominal_stab}. The linear matrix inequality (LMI) condition \eqref{eq:lyap_convex} is jointly convex in the decision variables $\tilde{\theta}$, $Q_1$, $Q_2$, where $Q_1$ and $Q_2$ are the inverse matrices of the Lyapunov certificate and the multiplier in \eqref{eq:lyap_cond}, and this allows for its efficient enforcement in the reinforcement learning process.
Denote the LMI~\eqref{eq:lyap_convex}, $Q_1 \in \mathbb{S}_{++}^{n_\zeta}$, and $Q_2 \in \mathbb{D}_{++}^{n_\phi}$ altogether as LMI$(Q_1,Q_2,\tilde{\theta}, \bar{P}, \bar{\Lambda})$, which will later be incorporated in the policy gradient process to provide exponential stability guarantees. 

Based on the stability condition \eqref{eq:lyap_convex}, define the convex stability set $\mathcal{C}(\bar{P}, \bar{\Lambda})$ as 
\begin{align}
\mathcal{C}(\bar{P}, \bar{\Lambda}) := \left\{\tilde{\theta}: \ \exists \ Q_1, Q_2, \ \text{s.t. LMI}(Q_1, Q_2, \tilde{\theta}, \bar{P}, \bar{\Lambda})\right\}.
\end{align}
Given matrices $\bar{P}$ and $\bar{\Lambda}$, any parameter $\tilde{\theta}$ drawn from $\mathcal{C}(\bar{P}, \bar{\Lambda})$ ensures the exponential stability of the feedback system \eqref{eq:feedback_nominal}. The set $\mathcal{C}(\bar{P}, \bar{\Lambda})$ is a convex inner-approximation to the set of parameters that renders the feedback system stable, and the choice of $\bar{P}$ and $\bar{\Lambda}$ affects the conservatism in the approximation. One way of choosing $(\bar{P}, \bar{\Lambda})$ is provided in Algorithm~\ref{alg:alg1}.
\begin{remark} \label{rem:sector}
Although only sector bounds \eqref{eq:shifted_QC} are used to describe the activation functions, we can further reduce the conservatism by using  off-by-one integral quadratic constraints \citep{lessard2016analysis} to also capture the slope information of activation functions as done in \citep{yin2021stability}.
\end{remark}

\begin{remark} \label{rem:lti}
Note that although we only consider LTI plant dynamics, the framework can be immediately extended to plant dynamics described by RNNs, or neural state space models provided in \citep{2018Kim}.
\end{remark}

\begin{figure*}[h]
	\centering
	\begin{minipage}{0.26\textwidth}
\includegraphics[width=1\textwidth]{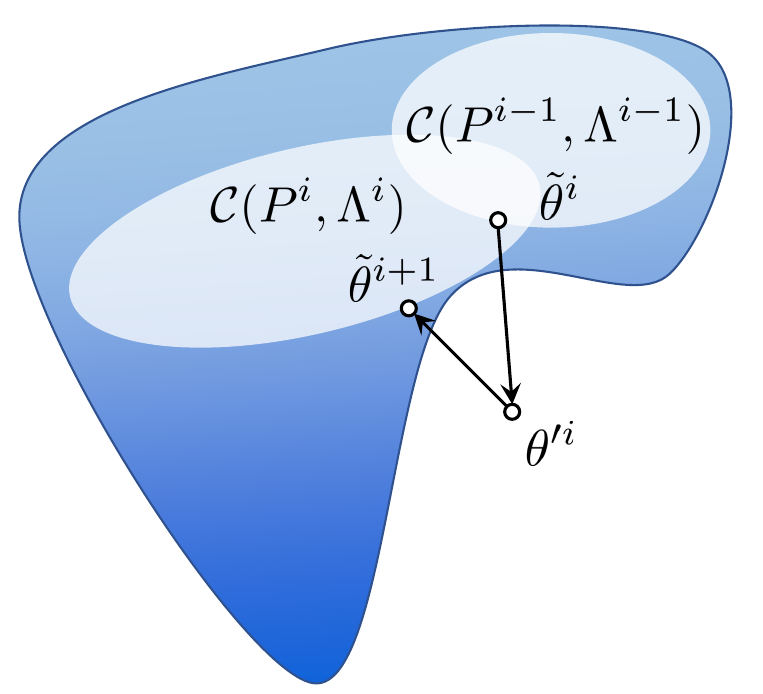}
    \captionof{figure}{Illustration of Algorithm~\ref{alg:alg1}. The set of all the stabilizing $\tilde{\theta}$ is given in blue.}
    \label{fig:projalg}  
	\end{minipage}
	\begin{minipage}{0.73\textwidth}
	\begin{subfigure}[b]{0.4\textwidth}
		\centering
		\includegraphics[width=1\textwidth]{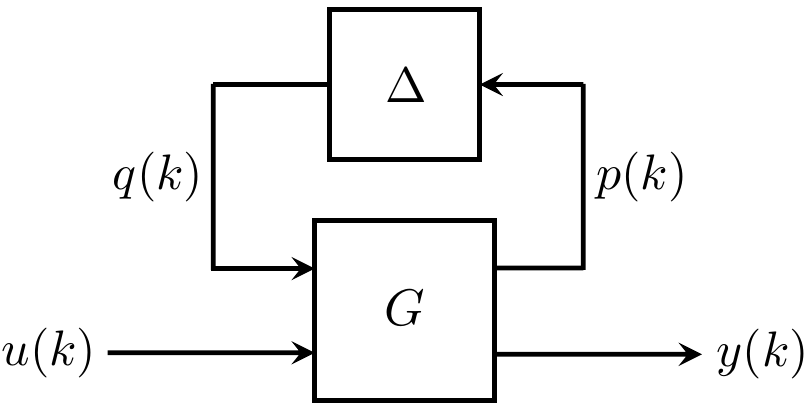}
		\caption{uncertain plant $F_u(G,\Delta)$}  
		\label{fig:Fu}
	\end{subfigure}
	\begin{subfigure}[b]{0.6\textwidth}  
		\centering 
		\includegraphics[width=1\textwidth]{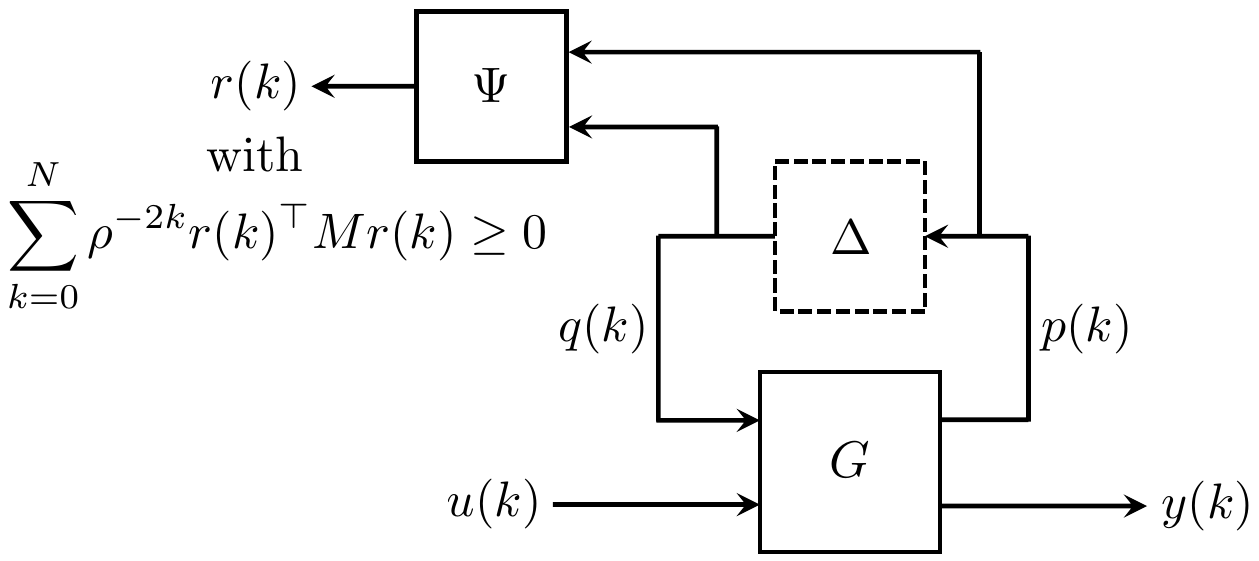}
		\caption{extended system of $G$ and $\Psi$ with IQCs}
    \label{fig:extended_sysm}   
	\end{subfigure}
	\caption{Uncertain plant and its corresponding constrained extended system }
	\end{minipage}
\end{figure*}


\subsection{Projected Policy Gradient}
Policy gradient methods \citep{sutton1999policy, williams1992simple} enjoy convergence to optimality under the tabular setting and achieve good empirical performance for more challenging problems. However, with little assumption about the problem setting, they do not offer any stability guarantee for the closed loop system. We propose the projected policy gradient method that enforces the stability of the interconnected system while the policy is dynamically explored and updated.

Policy gradient approximates the gradient with respect to the parameters of a stochastic controller using samples of trajectories via (\ref{eq:pg}) without any prior knowledge of plant parameters and the reward structures. Gradient ascent is then applied to refine the controller with the estimated gradients.
\small{
\begin{align}
\nabla_{\tilde{\theta}} R(\pi_{\tilde{\theta}}) &= \int_\mathcal{X} d^{\pi_{\tilde{\theta}}} (x) \int_{\mathcal{U}} \nabla_{\tilde{\theta}} \pi_{\tilde{\theta}} (u|x) Q^{\pi_{\tilde{\theta}}}(x, u) du dx \notag \\
&= \mathbb{E}_{\tilde{\theta}, x\sim d^\pi, u\sim \pi_{\tilde{\theta}}} [Q^\pi(x, u) \nabla_{\tilde{\theta}} \log \pi_{\tilde{\theta}} (u|x)]. \label{eq:pg}
\end{align}
}

In the above, $\tilde{\theta}$ represents the parameters of $\pi_{\tilde{\theta}}$. $R(\pi_{\tilde{\theta}})$ is the expected reward (negative cost) of the controller $\pi_{\tilde{\theta}}$. $d^\pi (x)$ is the distribution of states $x\in\mathcal{X}$ under $\pi_{\tilde{\theta}}$, where $\mathcal{X}$ is a set of states. $Q^\pi(x,u)$ is the reward-to-go after executing control $u\in\mathcal{U}$ at state $x$ under $\pi_{\tilde{\theta}}$, where $\mathcal{U}$ is a set of actions. 

Like any gradient method, policy gradient does not ensure the controller is in some specific set of preference (the set of stabilizing controller in our setting). To that end, a projection to the  stability set $\mathcal{C}(\bar{P}, \bar{\Lambda})$, $(Q_1, Q_2, \tilde{\theta}) \leftarrow \Pi_{\mathcal{C}(\bar{P}, \bar{\Lambda})}(\theta')$, is applied between gradient updates, where $\theta'$ is the updated parameter, 
 and the projection operator $\Pi_{\mathcal{C}(\bar{P}, \bar{\Lambda})}(\theta')$ is defined as the following convex program, 
\begin{align}
\Pi_{\mathcal{C}(\bar{P}, \bar{\Lambda})}(\theta') \in \arg\min_{\substack{Q_1, Q_2, \tilde{\theta}}}& \;\; {\left\|\begin{bmatrix}
Q_1 \\ Q_2
\end{bmatrix} - \begin{bmatrix}
\bar{P}^{-1} \\ \bar{\Lambda}^{-1}
\end{bmatrix}\right\|_F^2 + \|\tilde{\theta} - \theta' \|_F^2} \notag \\
\text{s.t. } &\;\; \text{LMI}(Q_1,Q_2,\tilde{\theta}, \bar{P}, \bar{\Lambda}). \label{eq:projection}
\end{align}
Through the recursively feasible projection step (\ie~the feasibility is inherited in subsequent steps, summarized in Theorem~\ref{thm:recursive_feas} in Appendix~\ref{app:recursive_feas}), we conclude with a projected policy gradient method to synthesize stabilizing RNN controllers as summarized in Algorithm~\ref{alg:alg1} and illustrated in Fig.~\ref{fig:projalg}. 

\begin{algorithm}[h]
	\caption{Projected Policy Gradient}
	\label{alg:alg1}
	\begin{algorithmic}[1]
		\Require{Matrices $P^0$ and $\Lambda^0$ s.t. $\mathcal{C}(P^0, \Lambda^0)$ is not empty, learning rate $\sigma > 0$.}
		\State $i \gets 0$
		\While{not converged}
		\State $\theta'^i \gets \tilde{\theta}^i + \sigma \nabla_{\tilde{\theta}} R(\pi_{\tilde{\theta}^i})$ \Comment{gradient step}
		\State $(Q_1^{i+1}, Q_2^{i+1},\tilde{\theta}^{i+1}) \gets \Pi_{\mathcal{C}(P^{i}, \Lambda^{i})}(\theta'^i)$ \Comment{proj. step}
		\State $P^{i+1} \gets (Q_1^{i+1})^{-1}$, \ $\Lambda^{i+1} \gets (Q_2^{i+1})^{-1}$
		\State $i \gets i + 1$
		\EndWhile
		\Ensure{$\pi_{\tilde{\theta}}$}
	\end{algorithmic}
\end{algorithm}


In the algorithm, the gradient step performs gradient ascent using the estimated gradient $\nabla_{\tilde{\theta}} R(\pi(\tilde{\theta}))$. The projection step projects the updated parameters $\theta'^i$ from the gradient step to the convex stability set $\mathcal{C}(P^i,\Lambda^i)$, where $P^i$ and $\Lambda^i$ are computed using $Q_1^i$ and $Q_2^i$ from the previous projection step. We choose $\Lambda^0 = I_{n_\phi}$, and construct $P^0$ based on the method in \citep{scherer1997multiobjective}.

\begin{remark} \label{rem:proj}
The projection step (\ref{eq:projection}) is a semi-definite program (SDP) involving $O((n_\xi + n_\phi) \times (n_\xi + n_\phi))$ variables. The complexity of interior point SDP solvers usually scales cubically with the number of variables, potentially bringing computational burden when $(n_\xi+n_\phi)$ is large. Luckily, most high dimensional problems admit low dimension structures \citep{Wright-Ma-2021} and such overhead is only paid at training without further operations at deployment.
\end{remark}

\begin{figure*}[ht]
  \centering
  \begin{minipage}{0.45\textwidth}
  \centering 
  \includegraphics[width=1\textwidth]{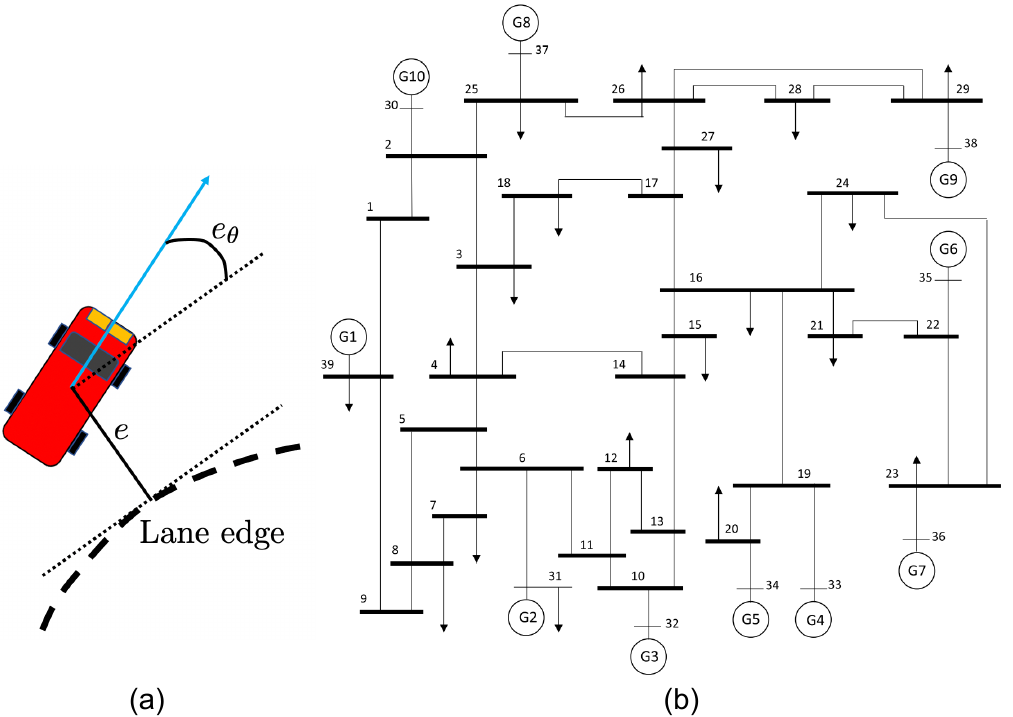}
  \caption{\small{(a) Vehicle \citep{alleyne1997comparison}; (b) Frequency Regulation on IEEE 39-bus New England Power System
  \citep{athay1979practical}}}
  \label{fig:envs}
  \end{minipage}
  \hspace{1cm}
  \begin{minipage}{0.32\textwidth}
  \includegraphics[width=0.95\textwidth]{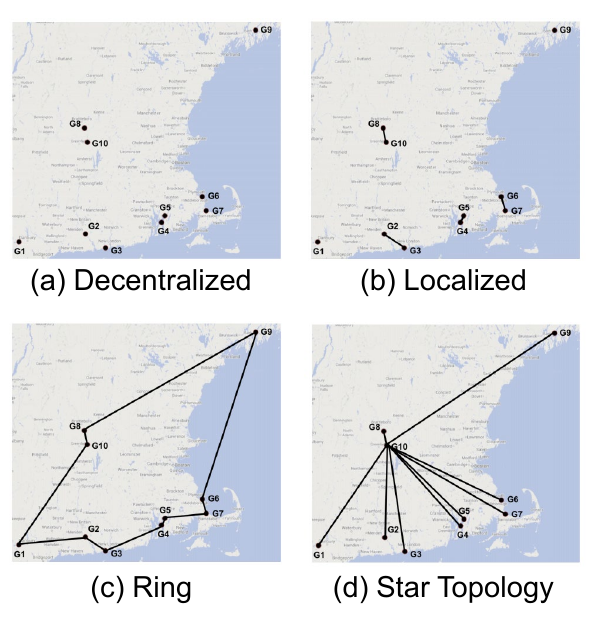}
  \caption{Four communication topologies for IEEE 39-bus power system \citep{fazelnia2016convex}.}
  \label{fig:ieee39}    
  \end{minipage}
\end{figure*} 

\section{Partially Observed Nonlinear Systems with
Uncertainty} \label{sec:IQC}
In the context of RL, we often need to handle systems with
nonlinear dynamics and/or unmodeled dynamics. Here we model such a nonlinear and uncertain plant $F_u(G,\Delta)$ (shown in Fig.~\ref{fig:Fu}) as an interconnection of the nominal plant $G$, and the perturbation $\Delta$ representing the nonlinear, and uncertain part of the system. Therefore, in this new problem setting, we only require system dynamics to be partially known, and we use $\Delta$ to cover the difference between the original real system dynamics, and partially known dynamics $G$. The plant $F_u(G, \Delta)$ is defined by the following equations:
\begin{align}
&G \left\{\begin{array}{ll}
x(k+1) &= A_G \ \ x(k)  + B_{G1} \ q(k) + B_{G2} \ u(k) \\
p(k) &= C_{G1} \ x(k) + D_{G1} \ q(k) \\
y(k) &= C_{G2} \ x(k) \end{array}\right. \notag\\
&\hspace{0.8cm} q(\cdot) \hspace{0.92cm} = \Delta(p(\cdot)) \label{eq:def_Delta}
\end{align}
where $x(k) \in \R^{n_G}$, $u(k) \in \R^{n_u}$, and $y(k) \in \R^{n_y}$ are the state, control input, and output of the nominal plant $G$, and $p(k)$ and $q(k)$ are the input and output of $\Delta$. The perturbation $\Delta: \ell_{2e}^{n_p} \rightarrow \ell_{2e}^{n_q}$ is a causal and bounded operator.


The perturbation $\Delta$ can represent various types of uncertainties and nonlinearities, including sector bounded nonlinearities, slope restricted nonlinearities, and unmodeled dynamics. Thus considering $\Delta$ extends our framework to the class of plants beyond LTI plants. The input-output relationship of $\Delta$ is characterized with an integral quadratic constraint (IQC) \citep{megretski1997system}, which consists of a filter $\Psi$ applied to the input $p$ and output $q$ of $\Delta$, and a constraint on the output $r$ of $\Psi$. The filter $\Psi$ is an LTI system with the zero initial condition $\psi(0) = 0_{n_{\psi \times 1}}$:
\begin{subequations}\label{eq:Psi_def}
\begin{align}
\begin{array}{ll}
    \psi(k+1) &= A_\psi \ \psi(k) + B_{\psi 1} \ p(k) + B_{\psi 2} \ q(k), \\
r(k) &= C_{\psi} \ \psi(k) + D_{\psi1} \ p(k) + D_{\psi 2} \ q(k).
\end{array}
\end{align}
\end{subequations}
To enforce exponential stability of the feedback system, we characterize $\Delta$ using the time-domain $\rho$-hard IQC, which is introduced in \citep{lessard2016analysis}, and its definition is also provided below. 
\begin{definition}
  Let $\Psi$ be an LTI system defined in \eqref{eq:Psi_def}, and $M \in \mathbb{S}^{n_r}$. Suppose $0 \leq \rho \leq 1$. A bounded, causal operator $\Delta : \ell_{2e}^{n_p} \rightarrow \ell_{2e}^{n_q}$ satisfies the time-domain $\rho$-hard IQC defined by $\Psi$, $M$, and $\rho$, if the following condition holds for all $p \in \ell_{2e}^{n_p}$, $q = \Delta(p)$, and $N \ge 0$
  \begin{align}
      \sum_{k=0}^N \rho^{-2k} r(k)^\top M r(k) \ge 0. \label{eq:def_IQC}
  \end{align}
  where $r$ is the output of $\Psi$ driven by inputs $(p, q)$.
\end{definition}
\begin{remark}
For a particular perturbation $\Delta$, there is typically a class of
valid $\rho$-hard IQCs defined by a fixed filter $\Psi$ and a
matrix $M$ drawn from a convex set $\mathcal{M}$. Thus, in the stability condition derived later, $M \in \mathcal{M}$ will also be treated as a decision variable. A library of frequency-domain $\rho$-IQCs is provided in \citep{boczar2017exponential} for various types of perturbations. As shown in \citep{schwenkel2021model}, a general class of frequency-domain $\rho$-IQCs can be translated into time-domain $\rho$-hard IQC by a multiplier factorization. 
\end{remark}

When deriving the stability condition, the perturbation $\Delta$ will be replaced by the time-domain $\rho$-hard IQC \eqref{eq:def_IQC} that describes it, and the associated filter $\Psi$, as shown in Fig.~\ref{fig:extended_sysm}. Therefore, the stabilizing controller will be designed for the extended system (an interconnection of $G$ and $\Psi$) subject to IQCs, instead of the original $F_u(G,\Delta)$. This controller will also be able to stabilize the original $F_u(G,\Delta)$.
Define the extended state as $x_e = [x^\top, \ \psi^\top]^\top$, and the dynamics of the extended system are given in Appendix~\ref{app:dyn_extendsys}. Define $\zeta = [x_e^\top, \ \ \xi^\top]^\top$ to gather the states of the extended system and the controller. The feedback system of the extended system and the controller has the dynamics
\begin{align}
\begin{array}{ll}
    \zeta(k+1) &= \mathcal{A} \ \zeta(k) + \mathcal{B}_1 \ q(k) + \ \mathcal{B}_2 \ z(k), \\
v(k) &= \mathcal{C}_1 \ \zeta(k) + \mathcal{D}_1 \ q(k) + \mathcal{D}_2 \ z(k), \\
r(k) &= \mathcal{C}_2 \ \zeta(k) + \mathcal{D}_3 \ q(k) + \mathcal{D}_4 \ z(k),
\end{array}
\end{align}
where 
\begin{align}
\begin{array}{ll}
&\hspace{0.13cm}\mathcal{A} = \smat{A_e + B_{e2}\tilde{D}_{k2}C_{e2} & B_{e2}\tilde{C}_{k1} \\ \tilde{B}_{k2} C_{e2} & \tilde{A}_k}, \hspace{0.85cm}\mathcal{B}_1 =\smat{B_{e1}  \\ 0_{n_\xi \times n_q} }, \\
&\hspace{0.03cm}\mathcal{B}_2 =\smat{ B_{e2} D_{k1}\frac{B_\phi-A_\phi}{2}  \\ B_{k1}\frac{B_\phi-A_\phi}{2}}, \hspace{1.85cm} \mathcal{C}_1 = \smat{D_{k 3} C_{e2} & C_{k2}}, \nonumber \\
&\mathcal{D}_1 = 0_{n_\phi \times n_q}, \hspace{0.65cm} \mathcal{D}_2 = 0_{n_\phi \times n_\phi}, \hspace{0.6cm} \mathcal{C}_2 = \smat{C_{e1}  & 0_{n_r \times n_\xi}}, \\ 
&\mathcal{D}_3 = D_{e1}, \hspace{1.13cm} \mathcal{D}_4 = 0_{n_r \times n_\phi}, \nonumber
\end{array}
\end{align}
and the state space matrices $(A_e, B_{e1}, ..., D_{e1})$ of the extended system are defined in Appendix~\ref{app:dyn_extendsys}. The next theorem merges the QC for $\tilde{\phi}$ and the time-domain $\rho$-hard IQC for $\Delta$ with the Lyapunov theorem to derive the exponential stability condition for the uncertain feedback system.  


\begin{figure*}[h!]
  \centering
  \includegraphics[width=0.85\textwidth]{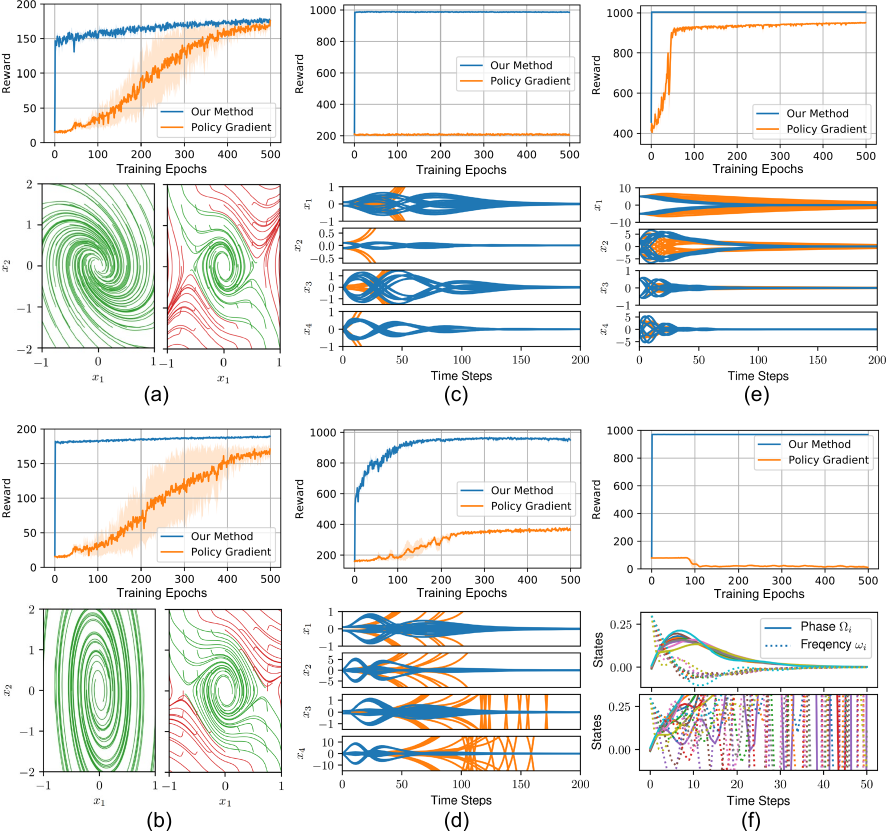}
  \caption{(a) Inverted Pendulum (linear); (b) Inverted Pendulum (nonlinear); (c) Cartpole;  (d) Pendubot; (e) Vehicle lateral control; (f) IEEE 39-bus New England Power System frequency regulation. The error bars of reward plots characterize standard deviation across 3 runs with different seeds. For (a) and (b), the left figures are from our method and right figures from policy gradient. Converging trajectories are rendered in green while diverging ones in red. For (c), (d), (e), trajectories from our method are given in blue while those from policy gradient are in orange. For (f), top figure is given by our method and bottom one by policy gradient.} 
  \label{fig:experiments}    
\end{figure*}

\begin{theorem}\label{thm:lyap_IQC}
Consider the feedback system of uncertain plant $F_u(G,\Delta)$, and RNN controller $\pi_{\tilde{\theta}}$. Assume $\Delta$ satisfies the time-domain $\rho$-hard IQC defined by $\Psi$, $\mathcal{M}$, and $\rho$, with $0 \leq \rho \leq 1$. Given $\bar{P} \in \R^{n_\zeta \times n_\zeta}$ and $\bar{\Lambda} \in \R^{n_\phi \times n_\phi}$. If there exist matrices $Q_1 \in \mathbb{S}_{++}^{n_\zeta}$, $Q_2 \in \mathbb{D}_{++}^{n_\phi}$, $M \in \mathcal{M}$, and parameters $\tilde{\theta}$  such that the following condition holds
\begin{align}\label{eq:convex_robust_stab}
    \bmat{R^\top \Gamma R ~~~~~~~~~ & \begin{bmatrix}\mathcal{A} & \mathcal{B}_1 & \mathcal{B}_2 \\ \mathcal{C}_1 & \mathcal{D}_1 & \mathcal{D}_2 \end{bmatrix}^\top \\ \begin{bmatrix}\mathcal{A} & \mathcal{B}_1 & \mathcal{B}_2 \\ \mathcal{C}_1 & \mathcal{D}_1 & \mathcal{D}_2 \end{bmatrix} & \begin{bmatrix} Q_1 & 0 \\ 0& Q_2 \end{bmatrix} } \succeq 0,
\end{align}
where $\Gamma = \diag(\rho^2 (2\bar{P} - \bar{P}^\top Q_1 \bar{P}), 2\bar{\Lambda} - \bar{\Lambda}^\top Q_2 \bar{\Lambda}, -M)$ is a block diagonal matrix and $R = \left[\begin{smallmatrix}I & 0 & 0 \\ 0 & 0 & I \\ \mathcal{C}_{2} & \mathcal{D}_{3} & \mathcal{D}_{4} \end{smallmatrix}\right]$.
Then for any $x(0)$, we have $\|x(k)\| \leq \sqrt{\text{cond}(P)}\rho^k \|x(0)\|$ for all $k$, where $P := Q_1^{-1}$, i.e., the feedback system is exponentially stable with rate $\rho$. 
\end{theorem}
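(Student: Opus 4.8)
The plan is to follow the dissipation-inequality argument underlying Theorem~\ref{thm:lyap_nominal}, now carrying the extra IQC channel $r$ through a weighted telescoping sum. I would take $V(\zeta) = \zeta^\top P \zeta$ with $P := Q_1^{-1} \succ 0$ as the Lyapunov candidate. The first step is to undo the sequential convexification: since $Q_1, Q_2 \succ 0$, expanding $(\bar P - Q_1^{-1})^\top Q_1 (\bar P - Q_1^{-1}) \succeq 0$ yields $Q_1^{-1} \succeq 2\bar P - \bar P^\top Q_1 \bar P$, and likewise $Q_2^{-1} \succeq 2\bar\Lambda - \bar\Lambda^\top Q_2 \bar\Lambda$. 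Because these linearized blocks occupy the top-left corner $R^\top \Gamma R$ of \eqref{eq:convex_robust_stab}, and $\rho^2 \geq 0$, we have $\Gamma \preceq \tilde\Gamma := \diag(\rho^2 P, \Lambda, -M)$ with $\Lambda := Q_2^{-1}$; enlarging the top-left block of a positive semidefinite block matrix keeps it positive semidefinite, so \eqref{eq:convex_robust_stab} implies the same LMI with $\Gamma$ replaced by $\tilde\Gamma$.

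Next I would apply the Schur complement with respect to the strictly positive lower-right block $\diag(Q_1, Q_2) = \diag(P^{-1}, \Lambda^{-1})$, obtaining $R^\top \tilde\Gamma R \succeq \mathcal{G}^\top \diag(P,\Lambda)\, \mathcal{G}$, where $\mathcal{G} = \smat{\mathcal{A} & \mathcal{B}_1 & \mathcal{B}_2 \\ \mathcal{C}_1 & \mathcal{D}_1 & \mathcal{D}_2}$. Evaluating this matrix inequality on $[\zeta(k)^\top,\, q(k)^\top,\, z(k)^\top]^\top$ and substituting the feedback dynamics for $\zeta(k+1)$, $v(k)$, and $r(k)$ converts it into the pointwise dissipation inequality
\begin{align*}
V(\zeta(k+1)) - \rho^2 V(\zeta(k)) \le \big(z(k)^\top \Lambda z(k) - v(k)^\top \Lambda v(k)\big) - r(k)^\top M r(k).
\end{align*}
Along the true closed-loop trajectory $z = \tilde\phi(v)$, so the QC \eqref{eq:shifted_QC} with multiplier $\Lambda \in \mathbb{D}^{n_\phi}_{++}$ makes the parenthesized term nonpositive, leaving $V(\zeta(k+1)) - \rho^2 V(\zeta(k)) \le - r(k)^\top M r(k)$.

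The final step is the weighted telescoping sum. Multiplying by $\rho^{-2(k+1)}$ and summing over $k = 0,\dots,N$ gives $\rho^{-2(N+1)} V(\zeta(N+1)) - V(\zeta(0)) \le -\rho^{-2}\sum_{k=0}^N \rho^{-2k} r(k)^\top M r(k)$, whose right-hand side is nonpositive precisely by the time-domain $\rho$-hard IQC \eqref{eq:def_IQC} satisfied by $\Delta$ (with $q = \Delta(p)$). Hence $V(\zeta(N+1)) \le \rho^{2(N+1)} V(\zeta(0))$ for all $N$. Since $\psi(0) = 0$ and $\xi(0) = 0$ force $\zeta(0) = [x(0)^\top, 0, 0]^\top$, sandwiching $V$ between $\lambda_{\min}(P)\|\cdot\|^2$ and $\lambda_{\max}(P)\|\cdot\|^2$ and using $\|x(k)\| \le \|\zeta(k)\|$ yields $\|x(k)\| \le \sqrt{\text{cond}(P)}\,\rho^k \|x(0)\|$.

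The main obstacle is the bookkeeping of the uncertainty channel: one must verify that $-M$ enters $R^\top \Gamma R$ through $r = \mathcal{C}_2 \zeta + \mathcal{D}_3 q + \mathcal{D}_4 z$ exactly so that the dissipation inequality produces the term $-r(k)^\top M r(k)$, and that the $\rho^{-2k}$ weighting in the telescoping sum matches the exponent in \eqref{eq:def_IQC}; a mismatch here would spoil the sign that lets the IQC sum be discarded. The convexification and Schur-complement steps are essentially identical to the nominal case and are routine, so the genuinely new content is threading the signal $q$ — otherwise unconstrained in the extended system — through the IQC so that it never appears with an unfavorable sign in the final estimate.
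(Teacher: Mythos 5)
Your proposal is correct and follows essentially the same route as the paper's proof in Appendix~\ref{app:pf_lyapIQC}: the linearization bound $2\bar P - \bar P^\top Q_1 \bar P \preceq Q_1^{-1}$ (with its $\bar\Lambda$ analogue), a Schur complement against $\diag(Q_1,Q_2)$, evaluation along the closed-loop trajectory to obtain the dissipation inequality with the sector term and $-r(k)^\top M r(k)$, and the $\rho^{-2k}$-weighted telescoping sum discharged by the $\rho$-hard IQC, followed by the same $\zeta(0)=[x(0)^\top,0,0]^\top$ and condition-number argument. The only difference is cosmetic: you relax $\Gamma \preceq \diag(\rho^2 P, \Lambda, -M)$ before taking the Schur complement, whereas the paper takes the Schur complement first and then applies the bound; both orderings yield the identical intermediate inequality \eqref{eq:lyapIQC_cond}.
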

The complete proof is provided in Appendix~\ref{app:pf_lyapIQC}. This LMI~\eqref{eq:convex_robust_stab} is jointly convex in $\tilde{\theta}, Q_1, Q_2$ and $M$ for any given $\bar{P}$ and $\bar{\Lambda}$. Based on this LMI, we define the convex robust stability set $\mathcal{C}_R(\bar{P},\bar{\Lambda})$:
\begin{align*}
    \mathcal{C}_R(\bar{P},\bar{\Lambda}) := \left\{ \tilde{\theta}: \ \exists \ Q_1 \in \mathbb{S}_{++}^{n_\zeta}, \ Q_2 \in \mathbb{D}_{++}^{n_\phi}, \ M \in \mathcal{M}, \ \text{s.t.} \ \eqref{eq:convex_robust_stab} \right\}.
\end{align*}

Any parameter $\tilde{\theta}$ drawn from $\mathcal{C}_R(\bar{P}, \bar{\Lambda})$ ensures the exponential stability of the feedback system of $F_u(G,\Delta)$ and $\pi_{\tilde{\theta}}$, and this convex robust stability set can be used in the projection step. 
\begin{remark}
If we only require the feedback system to be stable ($\rho = 1$ in \eqref{eq:convex_robust_stab}), a more general class of IQCs, the time-domain hard IQCs \citep{megretski1997system}, can be used to describe $\Delta$. 
\end{remark}

\section{Numerical Experiment} \label{sec:numerical} 
To compare our method against regular RNN controller trained without projection, we consider 6 different tasks involving control of partially observed dynamical systems, including a linearized inverted pendulum and its nonlinear variant, a cartpole, vehicle lateral dynamics, a pendubot, and a high dimensional power system. 
Fig.~\ref{fig:envs} gives a demonstrative visualization of tasks including vehicle lateral control and IEEE 39-bus power system frequency regulation, whose communication topologies are shown in Fig.~\ref{fig:ieee39}. Experimental settings and tasks definitions are detailed in Appendix~\ref{app:more_experiments}. \blfootnote{Code available at \url{https://github.com/beeperman/IQCRNN}}

The experimental results including rewards and sample trajectories at convergence are reported in Fig.~\ref{fig:experiments}. In all experiments, our method achieves high reward after the first few projection steps that ensures stability, greatly outperforming the regular method which suffers from instability even after converging. For pendubot and inverted pendulum tasks, our method keeps perfecting the performance after the first projection steps which already give high performance. While for cartpole, vehicle lateral control, and power system frequency regulation tasks, our method converges to optimal performance in one step. Our method gives converging trajectories for all tasks and achieves faster converging trajectories on the vehicle lateral control task. 
In comparison, policy gradient has been greatly impacted by the partial observability and converges to sub-optimal performance in cartpole, pendubot, and power system frequency regulation tasks and requires more steps to achieve optimal performance in inverted pendulum and vehicle lateral control tasks. Without stability guarantee, policy gradient fails to ensure converging trajectories from some initial conditions for all tasks excluding vehicle lateral control which is open-loop stable.



\section{Conclusion}
In this work, we present a method to synthesize stabilizing RNN controllers, which ensures the stability of the feedback systems both during learning and control process. 
We develop a convex set of stabilizing RNN parameters for nonlinear and partially observed systems. A novel projected policy gradient method is developed to synthesize a controller while enforcing stability by recursively projecting the parameters of the RNN controller to the convex set. By evaluating on a variety of control tasks, we demonstrate that our method learns stabilizing controllers with fewer samples, faster converging trajectories, and higher final performance than policy gradient.
Future directions include extensions to implicit models \citep{bai2019deep, elghaoui2020}, and model-free RL  \citep{jiang2012computational}.






\bibliography{ref.bib}

\onecolumn
\appendix

{\centering{{\LARGE\bfseries Supplementary material}}}

\section{Proofs and Illustrations} \label{app:proofs}
\subsection{Derivation for the Transformed Plant $\tilde{P}_\pi$}\label{app:tranplant}
The input to $P_\pi$ is transformed by the following equation:
\begin{align}
    w(k) = \frac{B_\phi-A_\phi}{2}z(k) + \frac{A_\phi+B_\phi}{2}v(k).\label{eq:tran_input}
\end{align}
Substituting the expression of $v(k)$ from \eqref{eq:PK_def} into \eqref{eq:tran_input} yields
\begin{align}
    w(k) = \frac{B_\phi-A_\phi}{2}z(k) + \frac{A_\phi+B_\phi}{2}C_{K2}\ \xi(k) + \frac{A_\phi+B_\phi}{2}D_{K3}\ y(k).\label{eq:tran_input_final}
\end{align}
Finally, the transformed plant $\tilde{P}_\pi$ can be obtained by substituting \eqref{eq:tran_input_final} into \eqref{eq:PK_def}.

\subsection{Proof of Theorem~\ref{thm:lyap_nominal}}\label{app:pf_nominal_stab}
\begin{proof}
Assume there exist $Q_1 \in \mathbb{S}_{++}^{n_\zeta}$, $Q_2 \in \mathbb{D}_{++}^{n_\phi}$, and $\tilde{\theta}$, such that \eqref{eq:lyap_convex} holds. It follows from Schur complements that \eqref{eq:lyap_convex} is equivalent to 
\begin{align}
     \begin{bmatrix}\mathcal{A} & \mathcal{B} \\ \mathcal{C} & \mathcal{D} \end{bmatrix}^\top \begin{bmatrix}Q_1^{-1} & 0 \\ 0 & Q_2^{-1} \end{bmatrix} \begin{bmatrix}\mathcal{A} & \mathcal{B} \\ \mathcal{C} & \mathcal{D} \end{bmatrix} - \begin{bmatrix}\rho^2 (2\bar{P} - \bar{P}^\top Q_1 \bar{P}) & 0 \\ 0 & 2\bar{\Lambda} - \bar{\Lambda}^\top Q_2 \bar{\Lambda} \end{bmatrix} \preceq 0. \label{eq:lemma1_lyap1}
\end{align}
It follows from the inequalities $\bar{P}^\top Q_1 \bar{P} - 2\bar{P} \succeq -Q_1^{-1}$ and $\bar{\Lambda}^\top Q_2 \bar{\Lambda} - 2\bar{\Lambda} \succeq -Q_2^{-1}$ for any $\bar{P} \in \R^{n_\zeta \times n_\zeta}$ and $\bar{\Lambda} \in \R^{n_\phi \times n_\phi}$ \citep{tobenkin2017convex, revay2020convex} that \eqref{eq:lemma1_lyap1} implies
\begin{align}\label{eq:lyap_rearr}
 \begin{bmatrix}\mathcal{A} & \mathcal{B} \\ \mathcal{C} & \mathcal{D} \end{bmatrix}^\top \begin{bmatrix}Q_1^{-1} & 0 \\ 0 & Q_2^{-1} \end{bmatrix} \begin{bmatrix}\mathcal{A} & \mathcal{B} \\ \mathcal{C} & \mathcal{D} \end{bmatrix} -   \begin{bmatrix}\rho^2 Q_1^{-1} & 0 \\ 0 & Q_2^{-1} \end{bmatrix} \preceq 0.
\end{align}
Defining $P = Q_1^{-1}$, and $\Lambda = Q_2^{-1}$, and rearranging \eqref{eq:lyap_rearr}, we have that $P, \Lambda$, and $\tilde{\theta}$ satisfy the following condition
\begin{align}\label{eq:lyap_cond}
\begin{bmatrix}\mathcal{A}^\top P \mathcal{A}- \rho^2 P & \mathcal{A}^\top P \mathcal{B} \\ \mathcal{B}^\top P \mathcal{A} & \mathcal{B}^\top P \mathcal{B} \end{bmatrix}  +  \begin{bmatrix}\mathcal{C} & \mathcal{D} \\ 0 & I \end{bmatrix}^\top \begin{bmatrix}\Lambda & 0 \\ 0 & -\Lambda \end{bmatrix}\begin{bmatrix}\mathcal{C} & \mathcal{D} \\ 0 & I \end{bmatrix}  \preceq 0,
\end{align}
Define the Lyapunov function $V(\zeta):= \zeta^\top P \zeta$. Multiplying \eqref{eq:lyap_cond} on the left and right by $[\zeta(k)^\top, \ z(k)^\top]$ and its transpose yields
\begin{align}
    V(\zeta(k+1)) - \rho^2 V(\zeta(k)) + \bmat{v(k) \\ z(k)}^\top \bmat{\Lambda & 0 \\ 0 & -\Lambda} \bmat{v(k) \\ z(k)} \leq 0.\label{eq:pf1_lyap}
\end{align}
It follows from $\tilde{\phi} \in$ sector $[-1_{n_\phi \times 1}, 1_{n_\phi \times 1}]$ that the last term in \eqref{eq:pf1_lyap} is nonnegative, and thus $V(\zeta(k+1)) \leq \rho^2 V(\zeta(k))$. Iterate it down to $k=0$, we have $V(\zeta(k)) \leq \rho^{2k}V(\zeta(0))$, which implies $\|\zeta(k)\| \leq \sqrt{\text{cond}(P)} \rho^k \|\zeta(0)\|$. Recall $\xi(0) = 0$. Therefore $$\|x(k)\| \leq \|\zeta(k)\| 
            \leq \sqrt{\text{cond}(P)} \rho^k \|\zeta(0)\| 
            =\sqrt{\text{cond}(P)} \rho^k \|x(0)\|, $$
and this completes the proof. 
\end{proof}



\subsection{Recursive Feasibility of the Projection Steps}\label{app:recursive_feas}
\begin{theorem} [Recursive Feasibility]\label{thm:recursive_feas}
If LMI$(Q_1, Q_2, \tilde{\theta}, P^{i-1}, \Lambda^{i-1})$ is feasible (\ie ~ LMI$(Q_1, Q_2, \tilde{\theta}, P^{i-1}, \Lambda^{i-1})$ holds for some $Q_1$, $Q_2$, and $\tilde{\theta}$), then LMI$(Q_1, Q_2, \tilde{\theta}, P^{i}, \Lambda^{i})$ is also feasible, where $P^i = (Q_1^i)^{-1}$ and $\Lambda^i = (Q_2^i)^{-1}$ are from the $i$-th step of projected policy gradient, for $i = 1, 2, \ldots$
\end{theorem}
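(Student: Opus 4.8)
The plan is to show that the \emph{same} triple produced by the $i$-th projection step remains feasible for the next LMI, so that feasibility propagates. By assumption $\mathrm{LMI}(Q_1,Q_2,\tilde{\theta},P^{i-1},\Lambda^{i-1})$ admits a feasible point, and by construction the projection step \eqref{eq:projection} returns a minimizer $(Q_1^i,Q_2^i,\tilde{\theta}^i)$ of this very program; hence $(Q_1^i,Q_2^i,\tilde{\theta}^i)$ satisfies $\mathrm{LMI}(Q_1^i,Q_2^i,\tilde{\theta}^i,P^{i-1},\Lambda^{i-1})$. I would take this triple as the candidate certificate for $\mathrm{LMI}(\cdot,\cdot,\cdot,P^i,\Lambda^i)$, where $P^i=(Q_1^i)^{-1}$ and $\Lambda^i=(Q_2^i)^{-1}$. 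The whole argument hinges on the observation that the linearizing inequality used in Theorem~\ref{thm:lyap_nominal} is \emph{tight} precisely at $\bar P=Q_1^{-1}$, which is exactly the update rule of Algorithm~\ref{alg:alg1}.

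First I would run the forward chain of the proof of Theorem~\ref{thm:lyap_nominal} on the feasible point at step $i$: applying Schur complements to $\mathrm{LMI}(Q_1^i,Q_2^i,\tilde{\theta}^i,P^{i-1},\Lambda^{i-1})$ gives the analogue of \eqref{eq:lemma1_lyap1}, and the matrix inequalities $\bar P^\top Q_1 \bar P - 2\bar P \succeq -Q_1^{-1}$ and $\bar\Lambda^\top Q_2\bar\Lambda - 2\bar\Lambda \succeq -Q_2^{-1}$ then yield \eqref{eq:lyap_rearr}. Rearranging exactly as in the derivation of \eqref{eq:lyap_cond}, I obtain that the \emph{exact} Lyapunov condition \eqref{eq:lyap_cond} holds with $P=(Q_1^i)^{-1}=P^i$, $\Lambda=(Q_2^i)^{-1}=\Lambda^i$, and $\tilde{\theta}=\tilde{\theta}^i$. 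This step uses only that $Q_1^i\in\mathbb{S}_{++}^{n_\zeta}$ (so its inverse is symmetric positive definite) and $Q_2^i\in\mathbb{D}_{++}^{n_\phi}$.

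Next I would reverse the chain at the new anchor $\bar P=P^i=(Q_1^i)^{-1}$, $\bar\Lambda=\Lambda^i=(Q_2^i)^{-1}$, keeping the decision variables equal to $(Q_1^i,Q_2^i,\tilde{\theta}^i)$. The key computation is that the linearized top-left blocks collapse exactly:
\begin{align*}
2\bar P-\bar P^\top Q_1^i\bar P &= 2(Q_1^i)^{-1}-(Q_1^i)^{-1}=P^i, &
2\bar\Lambda-\bar\Lambda^\top Q_2^i\bar\Lambda &= \Lambda^i,
\end{align*}
since $\bar P^\top Q_1^i \bar P=(Q_1^i)^{-1}$ and likewise for $\bar\Lambda$. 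Thus the off-diagonal linearization error vanishes and $\mathrm{LMI}(Q_1^i,Q_2^i,\tilde{\theta}^i,P^i,\Lambda^i)$ reduces to the block matrix with diagonal $(\rho^2 P^i,\Lambda^i,Q_1^i,Q_2^i)$; taking a Schur complement with respect to the invertible block $\diag(Q_1^i,Q_2^i)$ and using $P^i=(Q_1^i)^{-1}$, $\Lambda^i=(Q_2^i)^{-1}$ shows this LMI is \emph{equivalent} to \eqref{eq:lyap_cond} with $(P^i,\Lambda^i,\tilde{\theta}^i)$ — which was just established. Hence $\mathrm{LMI}(Q_1^i,Q_2^i,\tilde{\theta}^i,P^i,\Lambda^i)$ holds, proving feasibility of $\mathrm{LMI}(\cdot,\cdot,\cdot,P^i,\Lambda^i)$.

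The main obstacle, and the only place requiring care, is justifying that the relaxation incurs \emph{no} conservatism at the anchor point, i.e.\ that the one-directional implication $\eqref{eq:lyap_convex}\Rightarrow\eqref{eq:lyap_cond}$ in Theorem~\ref{thm:lyap_nominal} becomes an equivalence when $\bar P=Q_1^{-1}$ and $\bar\Lambda=Q_2^{-1}$. This is exactly the tightness of the inequality $\bar P^\top Q_1\bar P-2\bar P\succeq -Q_1^{-1}$ at $\bar P=Q_1^{-1}$, so the Schur complement can be run in reverse without loss. Verifying this equality (and the symmetry of $(Q_1^i)^{-1}$ and diagonal positive-definiteness of $(Q_2^i)^{-1}$ that license the inverse steps) is the crux; everything else is the routine Schur-complement bookkeeping already carried out in Appendix~\ref{app:pf_nominal_stab}.
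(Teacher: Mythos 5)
Your proposal is correct and takes essentially the same approach as the paper: both use the projection minimizer $(Q_1^i, Q_2^i, \tilde{\theta}^i)$ as the candidate feasible point and rely on the fact that the linearizing inequality $\bar{P}^\top Q \bar{P} - 2\bar{P} \succeq -Q^{-1}$ holds with equality at $\bar{P} = Q^{-1}$, which is exactly the anchor update of Algorithm~\ref{alg:alg1}. The only difference is stylistic: the paper applies the bound directly to the top-left blocks of the $4\times 4$ block LMI in one step, whereas you make a round trip through the Schur-complemented condition \eqref{eq:lyap_cond} and back, arriving at the same conclusion.
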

\begin{proof}
The main idea is to show that $(Q_1^i, Q_2^i, \tilde{\theta}^i)$ is already a feasible point for LMI$(Q_1, Q_2, \tilde{\theta}, P^{i}, \Lambda^{i})$. Since LMI$(Q_1, Q_2, \tilde{\theta}, P^{i-1}, \Lambda^{i-1})$ is feasible, at optimum of the projection step, we obtain the minimizer $(Q_1^i, Q_2^i, \tilde{\theta}^i)$ and LMI$(Q_1^i, Q_2^i, \tilde{\theta}^i, P^{i-1}, \Lambda^{i-1})$ holds. It follows from inequalities $2P^{i-1} - P^{i-1}{}^\top Q_1^i P^{i-1} \preceq (Q_1^i)^{-1}$ and $2\Lambda^{i-1} - \Lambda^{i-1}{}^\top Q_2^i \Lambda^{i-1} \preceq (Q_2^i)^{-1}$ that 
\begin{align}
         \begin{bmatrix}\rho^2(Q_1^i)^{-1} & 0 & \mathcal{A}^\top & \mathcal{C}^\top \\ 0 & (Q_2^i)^{-1} & \mathcal{B}^\top& \mathcal{D}^\top \\ \mathcal{A} & \mathcal{B} & Q_1^i & 0\\ \mathcal{C} & \mathcal{D} & 0 &Q_2^i \end{bmatrix} 
         \succeq 0,
\end{align}
which renders LMI$(Q_1^i, Q_2^i, \tilde{\theta}^i, P^{i}, \Lambda^{i})$ true at a feasible point of $(Q_1^i, Q_2^i, \tilde{\theta}^i)$.
\end{proof}

\subsection{Dynamics of the Extended System}\label{app:dyn_extendsys}
The extended system (shwon in Fig.~\ref{fig:extended_sysm}) is of the form
\begin{subequations}
\begin{align}
    x_e(k+1) &= A_e \ x_e(k) + B_{e1} \ q(k) + B_{e2} \ u(k) \\
r(k) &= C_{e1} \ x_e(k) + D_{e1} \ q(k)  \\
y(k) &= C_{e2} \ x_e(k)
\end{align}
\end{subequations}
where 
\begin{subequations}
\begin{align}
    A_{e} &= \begin{bmatrix}A_G & 0 \\ B_{\psi 1} C_{G1} & A_\psi\end{bmatrix}, \ B_{e1} =\begin{bmatrix}B_{G1} \\B_{\psi 1}D_{G1} + B_{\psi2}  \end{bmatrix}, \ B_{e2} =\begin{bmatrix} B_{G2} \\ 0 \end{bmatrix}\nonumber \\
C_{e1} &= \begin{bmatrix}D_{\psi 1}C_{G1} & C_\psi\end{bmatrix}, \ \ \ D_{e1} = \begin{bmatrix}D_{\psi 1} D_{G1} + D_{\psi 2} \end{bmatrix}, \ \ C_{e2} = \begin{bmatrix}C_{G2} & 0 \end{bmatrix}.
\end{align}
\end{subequations}

\subsection{Proof of Theorem~\ref{thm:lyap_IQC}}\label{app:pf_lyapIQC}
\begin{proof}
Assume there exist $Q_1 \in \mathbb{S}_{++}^{n_\zeta}$, $Q_2 \in \mathbb{D}_{++}^{n_\phi}$, $M \in \mathcal{M}$, and $\tilde{\theta}$ such that \eqref{eq:convex_robust_stab} holds. It follows from Schur complements that \eqref{eq:convex_robust_stab} is equivalent to 
\begin{align}
    \scriptsize{\begin{bmatrix}\mathcal{A} & \mathcal{B}_1 & \mathcal{B}_2 \\ \mathcal{C}_1 & \mathcal{D}_1 & \mathcal{D}_2 \end{bmatrix}^\top \begin{bmatrix} Q_1^{-1} & 0 \\ 0& Q_2^{-1} \end{bmatrix} \begin{bmatrix}\mathcal{A} & \mathcal{B}_1 & \mathcal{B}_2 \\ \mathcal{C}_1 & \mathcal{D}_1 & \mathcal{D}_2 \end{bmatrix}  - R^\top \begin{bmatrix} \rho^2(2\bar{P} - \bar{P}^\top Q_1 \bar{P}) & 0 &0\\ 0& 2\bar{\Lambda}-\bar{\Lambda}^\top Q_2 \bar{\Lambda} & 0 \\ 0&0& -M \end{bmatrix} R} \preceq 0.\label{pf_lemma2_lyap1}
\end{align}
By inequalities $\bar{P}^\top Q_1 \bar{P} - 2\bar{P} \succeq -Q_1^{-1}$ and $\bar{\Lambda}^\top Q_2 \bar{\Lambda} - 2\bar{\Lambda} \succeq -Q_2^{-1}$ for any $\bar{P} \in \R^{n_\zeta \times n_\zeta}$ and $\bar{\Lambda} \in \R^{n_\phi \times n_\phi}$, we have that \eqref{pf_lemma2_lyap1} implies 
\begin{align}
    \begin{bmatrix}\mathcal{A} & \mathcal{B}_1 & \mathcal{B}_2 \\ \mathcal{C}_1 & \mathcal{D}_1 & \mathcal{D}_2 \end{bmatrix}^\top \begin{bmatrix} Q_1^{-1} & 0 \\ 0& Q_2^{-1} \end{bmatrix} \begin{bmatrix}\mathcal{A} & \mathcal{B}_1 & \mathcal{B}_2 \\ \mathcal{C}_1 & \mathcal{D}_1 & \mathcal{D}_2 \end{bmatrix}  - R^\top \begin{bmatrix} \rho^2 Q_1^{-1} & 0 &0  \\ 0& Q_2^{-1} & 0 \\ 0&0& -M\end{bmatrix} R  \preceq 0. \label{eq:lyap_intermediate}
\end{align}
Defining $P = Q_1^{-1}$ and $\Lambda = Q_2^{-1}$, and rearranging  \eqref{eq:lyap_intermediate}, we have  $P, \Lambda, M$, and $\tilde{\theta}$ satisfy the following condition
\begin{align}\label{eq:lyapIQC_cond}
    &\bmat{\mathcal{A} & \mathcal{B}_1 & \mathcal{B}_2 \\ I & 0 & 0}^\top \bmat{P & 0 \\ 0 & -\rho^2 P}\bmat{\mathcal{A} & \mathcal{B}_1 & \mathcal{B}_2 \\ I & 0 & 0} + \bmat{\mathcal{C}_1 & \mathcal{D}_1 & \mathcal{D}_2 \\ 0 & 0 & I}^\top \bmat{\Lambda & 0 \\ 0 & -\Lambda} \bmat{\mathcal{C}_1 & \mathcal{D}_1 & \mathcal{D}_2 \\ 0 & 0 & I} \nonumber \\
    & \hspace{0.6cm}+ \bmat{\mathcal{C}_2 & \mathcal{D}_3 & \mathcal{D}_4}^\top M \bmat{\mathcal{C}_2 & \mathcal{D}_3 & \mathcal{D}_4} \preceq 0.
\end{align}
Define the Lyapunov function $V(\zeta):= \zeta^\top P \zeta$. Multiplying \eqref{eq:lyapIQC_cond} on the left and right by $\bmat{\zeta(k)^\top, \ q(k)^\top, \ z(k)^\top}$ and its transpose yields
\begin{align}
    V(\zeta(k+1)) - \rho^2 V(\zeta(k)) + \bmat{v(k) \\ z(k)}^\top \bmat{\Lambda & 0 \\ 0 & -\Lambda} \bmat{v(k) \\ z(k)} + r(k)^\top M r(k) \leq 0. \label{eq:pf2_lyap1}
\end{align}
It follows from $\tilde{\phi} \in$ sector $[-1_{n_\phi \times 1}, 1_{n_\phi \times 1}]$ that the third term is nonnegative. This yields 
\begin{align}
    V(\zeta(k+1)) - \rho^2 V(\zeta(k)) + r(k)^\top M r(k) \leq 0. \label{eq:pf2_lyap2}.
\end{align}
Multiply \eqref{eq:pf2_lyap2} by $\rho^{-2k}$ for each $k$ and sum over $k$ to obtain
\begin{align}
    \rho^{-2(k-1)}V(\zeta(k)) - \rho^2 V(\zeta(0)) + \sum_{t=0}^{k-1} \rho^{-2t} r(t) M r(t) \leq 0.
\end{align}
By the assumption that $\Delta$ satisfies the $\rho$-hard IQC, the last term is nonnegative, and thus $V(\zeta(k)) \leq \rho^{2k} V(\zeta(0))$ for all $k$, which implies $\|\zeta(k)\| \leq \sqrt{\text{cond}(P)} \rho^k \|\zeta(0)\|$. Recall $\xi(0) = 0_{n_\xi \times 1}$ and $\psi(0) = 0_{n_{\psi} \times 1}$. Therefore
    $$\|x(k)\| \leq \|\zeta(k)\|
            \leq \sqrt{\text{cond}(P)} \rho^k \|\zeta(0)\| 
            =\sqrt{\text{cond}(P)} \rho^k \|x(0)\|, $$
and this completes the proof.
\end{proof}

\section{More on Experiments}
\label{app:more_experiments}

In this section, we give detailed information about the experiments. All experiments are conducted on a custom built machine with 36-core Intel Broadwell Xeon CPUs with 64 GB of RAM and are terminated in tens of minutes. In all tasks, both our method and policy gradient are trained to convergence and are capped at 1000 epochs. For each epoch, the gradient is estimated from a batch of 6000 step data sampled from the controller interacting with the plant and is applied to update the parameters. The trajectory length is capped at $200$. The average reward from the sample of trajectories from the 6000 steps are reported. We show 500 epoch plots in Fig.~\ref{fig:experiments} for clearer capture of the convergence process. The experiments and models are coded in Python \citep{python} with Tensorflow \citep{tensorflow2015-whitepaper}, CVXPY \citep{diamond2016cvxpy} and MOSEK\footnote{\url{www.mosek.com}}. We choose the learning rate of 1e-3, picked from grid search from 1e-1, 1e-2, 1e-3, 1e-4 to give best reward at convergence for policy gradient on the inverted pendulum task. We use ADAM optimizer \citep{kingma2014adam} and gradient clipping to a maximum magnitude of $10$ for each parameter. We use $\tanh$ activation for all our neural network controllers. The implementation of policy gradient is consistent with \citet{cs285}. And our method adds the projection updates on the same code base. The initial guess for the Lyapunov matrix $P^0$ of Algorithm~\ref{alg:alg1} is constructed using the method introduced in \citep{scherer1997multiobjective}, which computes the Lyapunov matrix for output feedback control problem through LMIs. The initial guess for $\Lambda^0$ is an identity matrix.


\subsection{Inverted Pendulum} \label{app:invetpen}

We consider both a linearized inverted pendulum system and a full nonlinear version whose dynamics are given below. 
Both examples have two states $x_1$, $x_2$, representing the angular position (rad) and velocity (rad/s). 
Only the plant output, $y = x_1$, is observed. Our methods as described in Section~\ref{sec:LTI} and \ref{sec:IQC} are applied for the linear, and nonlinear variants, with the goal of balancing the inverted pendulum around the upright position. 

Consider an inverted pendulum with mass $m = 0.15$ kg, length $l = 0.5$ m, and
friction coefficient $\mu = 0.5$ Nms$/$rad. The discretized and linearized dynamics are:
	\begin{align}
	\bmat{x_1(k+1) \\ x_2(k+1)} &= \bmat{1 & \delta \\ \frac{g \delta}{l} & 1-\frac{\mu \delta}{ml^2}}\bmat{x_1(k) \\ x_2(k)} + \bmat{0\\ \frac{\delta}{ml^2}}u(k), \nonumber \\
	y(k) & = \bmat{1 & 0}\bmat{x_1(k) \\ x_2(k)},
	\end{align}
 where $u$ is the control
input (1/10 Nm), and $\delta = 0.02$~s is the sampling time.

The discretized nonlinear dynamics of the inverted pendulum are
	\begin{align*}
	\bmat{x_1(k+1) \\ x_2(k+1)} &= \bmat{1 & \delta \\ \frac{g \delta}{l} & 1-\frac{\mu \delta}{ml^2}}\bmat{x_1(k) \\ x_2(k)} + \bmat{0 \\ \frac{-g \delta}{l}} q(k) + \bmat{0\\ \frac{\delta}{ml^2}}u(k), \\
	y(k) & = \bmat{1 & 0}\bmat{x_1(k) \\ x_2(k)},\\
	q(k) &= \Delta(x_1(k)) = x_1(k) - \sin(x_1(k)).
	\end{align*}
The nonlinearity $\Delta(x_1) = x_1 - \sin(x_1)$ lies in the sector $[0, 0.41]$ for $x_1 \in [-1.4 \ \text{rad}, 1.4 \ \text{rad}]$. The time domian $\rho$-hard IQC for describing the nonlinearity $\Delta$ is defined by the static filter $\Psi = \smat{0.41 & -1 \\ 0 & 1}$, the matrix $M = \smat{0 & \lambda \\ \lambda & 0}$ for all $\lambda \ge 0$, and any $\rho \ge 0$.

At training, we pick $\rho = 1$ for both tasks. The observation of output is limited to $[-0.15, 0.15]$ and is normalized (\ie~ devided by $0.15$) before feeding into the controller. The trajectory terminates when the limit is violated or the length arrives 200. The hyperparameters are set to $n_\phi = 16, n_\xi = 16$. The following reward is used,
$$
R = \sum_{k=0}^T \left[1.0 - 100 x_1 (k)^2 - 10 x_2(k)^2 + 100 u(k)^2 \right],
$$
where $(x_1(k), x_2(k))$ is state and $u(k)$ is control at step $k$.

\subsection{Cartpole}\label{app:cartpole}

A linearized  cartpole system is considered, the goal of which is to balance the pendulum around the upright position while keeping the position of the cart close to the origin.  The discretized and linearized dynamics of the cartpole are
\begin{align*}
    \bmat{x_1(k+1) \\ x_2(k+1) \\ x_3(k+1)\\ x_4(k+1)} &= \bmat{1 & -0.001 & 0.02 & 0 \\ 0 & 1.005 & 0 & 0.02 \\ 0 & -0.079 & 1 & -0.001 \\ 0 & 0.55 & 0 & 1.005}\bmat{x_1(k) \\ x_2(k) \\ x_3(k)\\ x_4(k)} + \bmat{0 \\ 0 \\ 0.04 \\ -0.04}u(k), \\
    y(k) & = \bmat{1 & 0 &  0 & 0 \\ 0 & 1 & 0 & 0}\bmat{x_1(k) \\ x_2(k) \\ x_3(k) \\ x_4(k)},
\end{align*}
where $x_1$ and $x_2$ represent the cart position (m) and the angular position (rad) of the pendulum,  $x_3$ and $x_4$ are the corresponding cart velocities (m/s) and angular velocity (rad/s) of the pendulum, $u$ is the horizontal force (N) exerting on the cart, and $y$ is the plant output.

At training, we pick $\rho = 0.98$. The observation of output is limited to $x_1: [-1, 1], x_2: [-\pi/2, \pi/2]$ and is normalized before feeding into the controller. The trajectory terminates when the limit is violated or the length arrives 200. The hyperparameters are set to $n_\phi = 16, n_\xi = 16$. The following reward is used,
$$
R = \sum_{k=0}^T \left[5.0 - x_1(k)^2 - x_2(k)^2 - 0.04 x_{3}(k)^2 - 0.1 x_4(k)^2 - 0.2 u(k)^2 \right],
$$
where $(x_1(k), x_2(k), x_3(k), x_4(k))$ is state and $u(k)$ is control at step $k$.

\subsection{Pendubot} \label{app:pendubot}

A linearized pendubot is considered. The main goal is to balance the pendubot around the upright position.  The linearized and discretiezed dynamics (sampling time $\delta = 0.01$~s) of the pendubot are
\begin{align*}
    \bmat{x_1(k+1) \\ x_2(k+1) \\ x_3(k+1)\\ x_4(k+1)} &= \bmat{1 & 0.01 & 0 & 0 \\ 0.6738 & 1 & -0.2483 & 0 \\ 0 & 0 & 1 & 0.01 \\ -0.6953 & 0 & 1.0532 & 1}\bmat{x_1(k) \\ x_2(k) \\ x_3(k)\\ x_4(k)} + \bmat{0 \\ 0.4487 \\ 0 \\ -0.8509}u(k), \\
    y(k) &= \bmat{1 & 0 &  0 & 0 \\ 0 & 0 & 1 & 0}\bmat{x_1(k) \\ x_2(k) \\ x_3(k) \\ x_4(k)},
\end{align*}
where $x_1$ and $x_3$ represent the angular positions (rad) of the first link and the second link (relative to the first link),  $x_2$ and $x_4$ are the corresponding angular velocities (rad/s), $u$ represents the torque (Nm) applied on the first link, and $y$ is the plant output.

At training, we pick $\rho = 0.98$. The observation of output is limited to $x_1: [-1, 1], x_3: [-1, 1]$. The hyperparameters are set to $n_\phi = 16, n_\xi = 16$. The following reward is used,
$$
R = \sum_{k=0}^T \left[5.0 - x_{1}(k)^2 - 0.05 x_{2}(k)^2 - x_{3}(k)^2 - 0.05 x_{4}(k)^2 - 0.2 u(k)^2 \right],
$$
where $(x_{1}(k), x_{2}(k), x_{3}(k), x_{4}(k))$ is state and $u(k)$ is control at step $k$.

\subsection{Vehicle Lateral Control}\label{app:veh}
In this setting, we consider the vehicle lateral control problem from \citep{alleyne1997comparison, yin2021stability}. The goal is for the vehicle to track the lane edge while avoiding strong control inputs. The continuous-time linear vehicle lateral dynamics are
\begin{align*}
\scriptsize{\bmat{\dot{e} \\ \ddot{e} \\ \dot{e}_\theta \\ \ddot{e}_\theta} = \bmat{0 & 1 & 0&0 \\ 0 & \frac{C_{\alpha f} + C_{\alpha r}}{m U} & -\frac{C_{\alpha f} + C_{\alpha r}}{m} & \frac{a C_{\alpha f} - b C_{\alpha r}}{m U} \\ 0 & 0 & 0&1 \\ 0 & \frac{a C_{\alpha f} - b C_{\alpha r}}{I_z U} & -\frac{a C_{\alpha f} - b C_{\alpha r}}{I_z} & \frac{a^2 C_{\alpha f} + b^2 C_{\alpha r}}{I_z U}} \bmat{e \\ \dot{e} \\ e_\theta \\ \dot{e}_\theta}
+\bmat{0 \\ -\frac{C_{\alpha f}}{m} \\ 0 \\ -\frac{a C_{\alpha f}}{I_z}} u + \bmat{0\\ \frac{a C_{\alpha f} - b C_{\alpha r} - mU^2}{m} \\ 0 \\ \frac{a^2 C_{\alpha f} + b^2 C_{\alpha r}}{I_z}} c}
\end{align*}
where $e$ is the perpendicular distance to the lane edge (m), and $e_\theta$ is the angle between the tangent to the straight section of the road and the projection of the vehicle's longitudinal axis (rad). Let $x = [e,\dot{e}, e_\theta, \dot{e}_\theta]^\top$ denote the plant state. The control $u$ is the steering angle of the front wheel (rad). The plant output is $y = [e, \ e_\theta]^\top$. The disturbance $c$ is the road curvature ($1/$m). In this task, we consider a constant curvature $c \equiv 0$. The values for the rest of the parameters are given in \citep{yin2021stability}. The controller is synthesized for the discretized dynamics with the sampling time $\delta=0.02$~s.

At training, we pick $\rho = 0.98$. The observation of output is limited to $e: [-10, 10], e_\theta: [-1, 1]$ and is normalized before feeding into the controller. The trajectory terminates when the limit is violated or the length arrives 200. The hyperparameters are set to $n_\phi = 16, n_\xi = 16$. The following reward is used,
$$
R = \sum_{k=0}^T \left[5.0 - 0.01 e(k)^2 - 0.04 \dot{e}(k)^2 - e_{\theta}(k)^2 - 0.04 \dot{e}_{\theta}(k)^2 - \frac{72}{\pi^2} u(k)^2 \right],
$$
where $(e(k), \dot{e}(k), e_{\theta}(k), \dot{e}_{\theta}(k))$ is state and $u(k)$ is control at step $k$.

\subsection{Power System Frequency Regulation}\label{app:powersys}

In this task, we address the distributed control problem for IEEE 39-Bus New England Power System frequency regulation \citep{fazelnia2016convex, jin2020stability} with the decentralized communication topology shown in Fig.~\ref{fig:ieee39} (a). The main goal is to optimally adjust the mechanical power input to each generator such that the phase and frequency at each bus can be restored to their nominal values after a possible perturbation. The linearized and discretized (sampling time $\delta = 0.2$~s) dynamics of the power system are
\begin{align*}
    \bmat{\Omega(k+1) \\ \omega(k+1)} &= \bmat{I_n & \delta I_n \\ -\delta M_p^{-1} L & -\delta M_p^{-1} D + I_n} \bmat{\Omega(k) \\ \omega(k)} + \bmat{0_{n \times n} \\ \delta M_p^{-1}} u(k), \\
    y(k) &= \bmat{I_n, 0_{n \times n}} \bmat{\Omega(k) \\ \omega(k)},
\end{align*}
where $n$ is the number of rotors ($n = 10$ in this example), the states $\Omega, \omega \in \R^{n}$ represent the phases and frequencies of the rotors, $u \in \R^{n}$ represents the generator mechanical power injections, values for inertia coefficient matrix $M_p$, damping coefficient matrix $D$, and Laplacian matrix $L$ are specified in \citep[Section IV]{fazelnia2016convex}. 

Designing an optimal controller for these systems is challenging, because they consist of interconnected subsystems that have limited information sharing. For the case of distributed control, which requires the resulting controller to follow a sparsity pattern, it has been long known that finding the optimal solution amounts to an NP-hard optimization problem in general (even if the underlying system is linear). End-to-end reinforcement learning comes in handy, because it does not require model information by simply interacting with the environment while collecting rewards.

At training, we pick $\rho = 0.98$. The observation of output is limited to $\Omega_i: [-0.5, 0.5]$ and is normalized before feeding into the controller. The trajectory terminates when the limit is violated or the length arrives 200. The hyperparameters are set to $n_\phi = 20, n_\xi = 20$. The following reward is used,
$$
R = \sum_{k=0}^T \left[5.0 - \|\Omega(k)\|^2 - \|\omega(k)\|^2 -  0.2 \|u(k)\|^2 \right],
$$
where $(\Omega(k), \omega(k))$ is state and $u(k)$ is control at step $k$.

\subsection{Wall-clock Performance: Inverted Pendulum}
In Section~\ref{sec:numerical}, we compare our method against policy gradient by giving a plot of reward versus the number of epochs. The idea is to compare the performance of the methods given that the same number of observations are obtained. Another aspect is to compare the performance given the same amount of computation time. We conduct the experiment for the linear inverted pendulum experiment and the result is given in Fig.~\ref{fig:wall}.

\begin{figure}[h]
  \centering
  \includegraphics[width=0.4\textwidth]{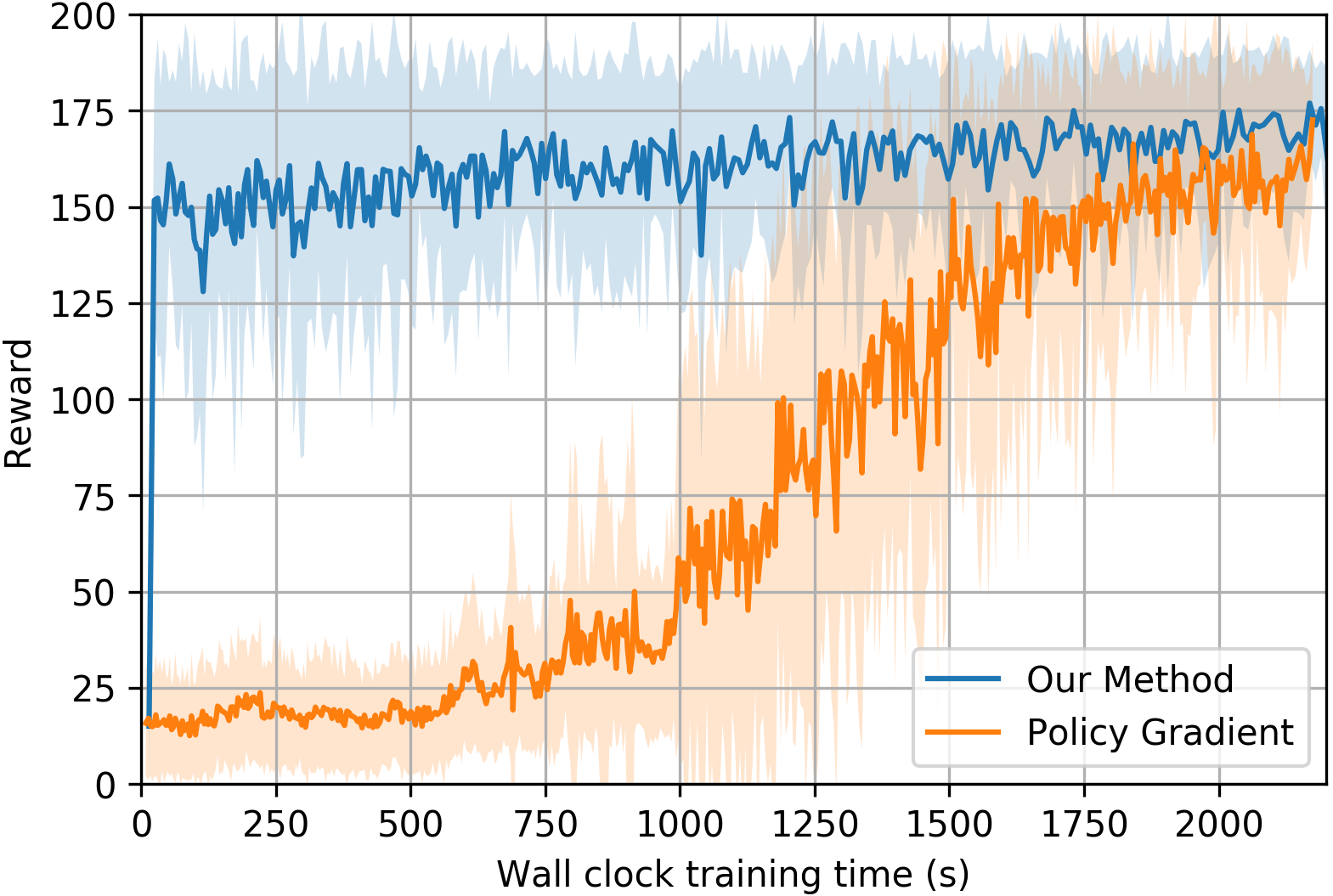}
  \caption{Wall-clock performance for pendulum}
  \label{fig:wall}    
\end{figure}

Although our method is adding a fraction of time for each epoch to conduct the projection step, the overall trend remains the same and the benefits from the ensured stability strictly out-weights the overhead.
\end{document}